\startlocaldefs \numberwithin{equation}{section} \theoremstyle{plain}
\newtheorem{thm}{Theorem}[section] \newtheorem{lemma}{Lemma}[section]
\newtheorem{proposition}{Proposition}[section]
\begin{document}

\newcommand{\gt}{\rightarrow}
\newcommand{\veps}{\varepsilon}
\newcommand{\E}{\mathop{\mathbb{E}}}
\newcommand{\cov}{\mathop{\text{cov}}}
\newcommand{\lb}{\left( }
\newcommand{\rb}{\right) }%
\begin{frontmatter}
  \title{On singular value distribution of large-dimensional
    autocovariance matrices}
  \runtitle{Singular values of  autocovariance matrices}

%
%
%
%

\begin{aug}
\author{\fnms{Zeng} \snm{Li}\thanksref{t1,m1}\ead[label=e1]{u3001205@hku.hk}},
\author{\fnms{Guangming} \snm{Pan}\thanksref{m2}\ead[label=e2]{gmpan@ntu.edu.sg}}
\and
\author{\fnms{Jianfeng} \snm{Yao}\thanksref{m1}
\ead[label=e3]{jeffyao@hku.hk}}
\thankstext{t1}{Corresponding author.}
\runauthor{Z. Li, G. Pan and J. Yao}

\affiliation{The University of Hong Kong\thanksmark{m1} and The Nanyang Technological University\thanksmark{m2}}

\address{Department of Statistics and Actuarial Science\\
      The University of Hong Kong\\
      \printead{e1,e3}}
 \address{School of Physical \& Mathematical Sciences\\
      Nanyang Technological University\\
      \printead{e2}}
\end{aug}

  \begin{abstract}
   Let $(\veps_j)_{j\geq 0}$ be a sequence of independent $p-$dimensional random vectors and $\tau\geq1$ a given integer.
   From a sample $\veps_1,\cdots,\veps_{T+\tau-1},\veps_{T+\tau}$ of the sequence, the so-called lag$-\tau$ auto-covariance matrix is
   $C_{\tau}=T^{-1}\sum_{j=1}^T\veps_{\tau+j}\veps_{j}^t$.
    When the dimension $p$ is large compared to the sample size $T$,
    this paper establishes the limit of the singular value
    distribution of $C_\tau$  assuming that $p$ and $T$ grow to
    infinity proportionally and the sequence satisfies  a Lindeberg
    condition. 
   Compared to existing asymptotic results on sample covariance matrices developed in random matrix theory,
    the case of an auto-covariance matrix
    is much more involved due to the fact that the summands are dependent and the matrix $C_\tau$ is not symmetric.
    Several new techniques are introduced for
    the derivation of the main theorem.
  \end{abstract}

  \begin{keyword}[class=MSC]\kwd[Primary ]{60F99}  \kwd[; secondary ]{62M10} \kwd{62H99}
  \end{keyword}

  \begin{keyword}
  \kwd{random matrix theory}
    \kwd{large-dimensional  autocovariance matrix}
    \kwd{limiting spectral distribution}
    \kwd{singular value distribution}
  \end{keyword}

\end{frontmatter}

\section{Introduction}

Let  $\veps_1,\ldots,\veps_{T+\tau}$ be a sample
from a  stationary process with values in $\mathbb{R}^p$. The $p\times p$ matrix
\begin{equation}\label{Ceps}
  C_\tau := \frac1T \sum_{j=1}^T \veps_{\tau+j} \veps_{j}^t,
\end{equation}
is the so-called lag$-\tau$ {\em sample auto-covariance matrix}
of the process (here $u^t$ denotes the transpose of a vector or
matrix $u$).
  In a  classical low-dimensional situation
where the dimension $p$ is assumed much smaller than the sample size $T$,
$C_\tau$ is very close to $\E C_\tau=\E\veps_{1+\tau} \veps_{1}^t$ so that its
asymptotic behavior when $T\to\infty$ (so $p$ is considered as fixed)
is well known. In the high-dimensional context where
typically  the dimension $p$ is of same order as $T$,
$C_\tau$  will  not converge to $\E C_\tau$  and
its asymptotic properties have not been  well investigated.
In this paper, we study the empirical spectral distribution (ESD)
of $C_\tau$, namely, the
distribution generated by its $p$ singular
values. The main result of the paper is the establishment of
the limit of this ESD when $(\veps_j)$ is an independent sequence with elements having a
finite fourth moments while
$p$ and $T$ grow to infinity proportionally.

In order to understand the importance of limiting spectral
distribution (LSD)
of singular values  of the auto-covariance matrix
$C_\tau$,  we describe a statistical problem where these
distributions are of central interest.
In a recent stimulating paper,
\citet{LamYao12} considers the following dynamic factor model
\begin{equation}\label{model}
  {x}_i=\Lambda {f}_i + \veps_i +\mu,
\end{equation}
where $\{{x}_i; ~ 0\le i\le T\}$
is an observed $p$-dimensional sequence,
$\{f_i\}$  a sequence of $m$-dimensional ``latent factor" ($m\ll p$)
uncorrelated with the error process $\{\veps_i\}$ and $\mu\in \mathbb{R}^p$ is the general mean.
A particularly important question  here is the  determination of
the number $m$ of factors.
For any stationary process $\{  w_i\}$, let
 ${\Sigma}_{  w} =\cov({  w}_i,{  w}_{i-1})$  be its
(population) lag-$1$  auto-covariance matrix, we have
\[     {\Sigma}_{x} = \Lambda {\Sigma}_{f} \Lambda^t.
\]
It turns out that ${\Sigma}_{x}$ has exactly $m$ non-null singular
values so that based on a sample ${x}_0,{x}_1,\ldots,{x}_T$
it seems natural  to infer $m$ from
the  singular values of the sample lag-1 auto-covariance matrix
\begin{eqnarray*}
 \Gamma_x &=& \frac1T \sum_{j=1}^T (\Lambda{f}_j+\veps_j)(\Lambda{f}_{j-1}+\veps_{j-1})^t\\
  &=&  \Lambda \left(\frac1T\sum_{j=1}^T {f}_j {f}^t_{j-1}\right)\Lambda^t
  + \Lambda  \left(\frac1T\sum_{j=1}^T {f}_j \veps^t_{j-1}\right)
  + \left(\frac1T\sum_{j=1}^T \veps_j {f}^t_{j-1} \right)\Lambda^t
  + C_{1}~.
\end{eqnarray*}
Because $\Lambda$ has rank $m$,
the first three terms all have rank bounded by $m$ and
$\Gamma_{x}$ appears as a finite-rank perturbation of the lag-1 auto-covariance matrix
$C_1$ which  in general has rank $p\gg m$. Therefore, understanding the properties of the singular
values of $C_1$ will be of primary importance for the
understanding of the $m$ largest singular values of
the matrix of  $\Gamma_{x}$ which are, as said above,
fundamental for the determination of the number of factors
$m$. Notice however that this statistical problem is given here to describe a potential application of the theory established in this paper, but this theory on singular value distribution is general and can be applied to fields other than statistics.

If we take $\tau=0$ in \eqref{Ceps}, the matrix $S=\frac 1T\sum_{j=1}^T\veps_j\veps_j^t$ is the sample covariance matrix from the observations. The theory for eigenvalue distributions of $S$ has been extensively studied in the random matrix literature dating back to the seminal paper \cite{MP1967}. In this paper, the famous Mar$\check{c}$enko-Pastur law as limit of eigenvalue distributions has been found for a wide class of sample covariance matrices. Further development includes the almost sure convergence of these distributions (\cite{Silv95}) and conditions for convergence of the largest and the smallest eigenvalues, see \cite{BY93}. Meanwhile book-length analysis of sample covariance matrices can be found in \cite{BS10}, \cite{AGZ10}, \cite{PasShc10}. The situation of an auto-covariance matrix $C_\tau$ is completely different. To author's best knowledge, none of the existing literature in random matrix theory treats the sample auto-covariance matrix and the limit for its eigenvalue distribution found in this paper is new.

There are basically two major differences between $C_\tau$ and $S$. First, while $S$ is a non-negative symmetric random matrix, $C_\tau$ is even not symmetric and we must rely on singular value distributions which are in general much more involved than eigenvalue distributions. Secondly, because of the positive lag $\tau$, the summands in $C_\tau$ are no more independent as it is the case for the sample covariance matrix $S$. This again makes the analysis of $C_\tau$ more difficult. As a consequence of these major differences, several new techniques are introduced in the paper in order to complete the proofs, although the general strategy is common in the random matrix theory (see \citet{BS10,PasShc10}). For example, the characterization of the Stieltjes transform of the limiting distribution is obtained via a system of equations due to the time delay $\tau$ where for the case of sample covariance matrix, the characterization is given by a single equation(\cite{MP1967}, \cite{Silv95}).

The rest of the paper is organized as follows. The main theorem of the paper is introduced in Section~\ref{results}.
Section~\ref{proofs} details the proof of the main theorem when time lag $\tau=1$. Section ~\ref{extension} generalizes the proof from time lag $\tau=1$ to any given positive number.
Meanwhile, in contrast to other aspects discussed above, the preliminary steps of truncation, centralization and standardization of the matrix entries are similar to the case of a sample covariance matrix. They are given in Appendix~\ref{app}. To ease the reading of the proof,
technical lemmas are grouped in Section~\ref{lemmas}.

\section{Main Results}
\label{results}

In this paper, we intend to derive the limiting singular value
distribution of the lag$-\tau$ auto-covariance matrix defined in \eqref{Ceps}. It will be done in two steps. We derive the main result first for the lag-1$(\tau=1)$ sample auto-covariance matrix $C_1=\frac 1T\sum_{t=1}^T\veps_j\veps_{j-1}^t$. It turns out that the general case $\tau\geq 1$ is essentially the same and the extension is easily obtained. The details of the extension are given in Section~\ref{extension}.

Therefore, we consider the lag-1 sample auto-covariance matrix $C_1= \frac1T \sum_{j=1}^T \veps_j \veps_{j-1}^t$. By definition, it  is equivalent to study the limiting spectral distribution(LSD) of the matrix
$$A=C_{1}C_{1}^t=\frac{1}{T^2}(\sum_{j=1}^T\varepsilon_j\varepsilon^t_{j-1})(\sum_{j=1}^T\varepsilon_{j-1}\varepsilon^t_j).$$ Alternatively,
\[A=\frac{1}{T^{2}}XY^{t}YX^{t},\] where
$X=\left(\varepsilon_{1},\cdots,\varepsilon_{T}\right)_{p\times
  T}$, $Y=\left(\varepsilon_{0},\cdots,\varepsilon_{T-1}\right)_{p\times
  T}$. Here we define a modified version of the A
matrix,
\[B=\frac{1}{T^{2}}Y^{t}YX^{t}X=\sum_{j=1}^{p}s_{j}s_{j}^{t}\sum_{j=1}^{p}r_{j}r_{j}^{t},
\]
where
$s_{j}=\frac{1}{\sqrt{T}}\left(\varepsilon_{j0},\varepsilon_{j1},\cdots,\varepsilon_{j,T-1}\right)^{t}$
is the j-th row of $Y$, and
$r_{j}=\frac{1}{\sqrt{T}}\left(\varepsilon_{j1},\varepsilon_{j2},\cdots,\varepsilon_{j,T}\right)^{t}$
the j-th row of $X$. As $A$ and $B$ have same nonzero eigenvalues, the
LSD of $A$ can be derived from the LSD of $B$.

The main result of the paper is
\begin{thm}\label{th1}
  Let the following assumptions hold:
  \begin{itemize}
  \item[(a)]
    $\varepsilon_{i}=\left(\varepsilon_{1i},\cdots\varepsilon_{pi}\right)^{t},i=0,1,2,\cdots,T$
    are independent p-dimensional real-valued random vectors with
    independent entries satisfying condition:
    \begin{displaymath}
      \mathbb{E}(\varepsilon_{it})=0,~\mathbb{E}(\varepsilon^2_{it})=1,~\sup_{1\leq i\leq p,0\leq t\leq T}\mathbb{E}\left(|\varepsilon_{it}|^{4}\right)<M,
    \end{displaymath}
   for some constant $M$ and for any $\eta>0$,
    \begin{displaymath}
           \frac{1}{\eta^{4}pT}\sum_{i=1}^{p}\sum_{t=0}^{T}\mathbb{E}\left(|\varepsilon_{it}|^{4}I_{(|\varepsilon_{it}|\geq\eta T^{1/4})}\right)=o\left(1\right);
    \end{displaymath}
  \item[(b)] As $p\rightarrow \infty$, the sample size
    $T=T(p)\rightarrow \infty$ and $p/T\gt c>0$.
  \end{itemize}
  Then,
  \begin{itemize}
  \item[(1)]  as $p,T\rightarrow \infty$, almost surely, the empirical
  spectral distribution $F^B$ of $B$, converges to a non-random
  probability distribution $\b{F}$ whose Stieltjes transform
  $x=x(\alpha)$, $\alpha\in \mathbb{C}\setminus \mathbb{R}$, satisfies
  the equation
  \begin{equation}\label{eq1}
    \alpha^{2}x^{3}-2\alpha\left(c-1\right)x^{2}+\left(c-1\right)^{2}x-\alpha x-1=0.
  \end{equation}
 \item[(2)]  Moreover, for $\alpha\in \mathbb{C}^+=\{z:\mathfrak{Im}z>0\}$, equation \eqref{eq1} admits an unique solution $\alpha\mapsto x(\alpha)$ with positive imaginary part and the density function of the LSD $\b{F}$ is:

  \begin{tabular}{@{}l}
    \begin{minipage}[c]{0.65\textwidth}
      {\begin{align*} f(u)&=\frac{1}{\pi
            u}\left\{-u-\frac{5(c-1)^2}{3}+\frac{2^{4/3}(3u+(c-1)^2)(c-1)}{3d(u)^{1/3}}
            +\frac{2^{2/3}(c-1)d(u)^{1/3}}{3}\right.\\
          &\quad + \left.\frac{1}{48}\left[-8(c-1)+\frac{2\times
                2^{1/3}(3u+(c-1)^2)}{d(u)^{1/3}}+2^{2/3}d(u)^{1/3}\right]^2\right\}^{1/2},
        \end{align*}}
    \end{minipage}
    \\
    where
    $d(u)=-2(c-1)^3+9(1+2c)u+3\sqrt{3}\sqrt{u(-4u^2+(-1+4c(5+2c))u-4c(c-1)^3)}$.
  \end{tabular}

  \vspace{0.25cm} Moreover, the support of f(u) is $(0,b]$ for $0<c\leq 1$, and
  $[a,b]$ for $c>1$, where
  \[
  a=\frac{1}{8}(-1+20c+8c^2-(1+8c)^{3/2}),
  \quad b=\frac{1}{8}(-1+20c+8c^2+(1+8c)^{3/2}).
  \]

  \end{itemize}

\end{thm}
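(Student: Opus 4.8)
The plan is to follow the standard Marčenko–Pastur strategy of identifying the limiting spectral distribution through its Stieltjes transform, while accounting for the two structural novelties of $B$: the time-shift coupling between $X$ and $Y$ (they share the vectors $\veps_1,\dots,\veps_{T-1}$) and the non-symmetric product structure. After the preliminary truncation, centralization and standardization (done in Appendix~\ref{app}), I would write the Stieltjes transform $x_T(\alpha)=\frac1p\operatorname{tr}(B-\alpha I)^{-1}$ and split it via the resolvent identity into contributions indexed by the rows $r_j,s_j$. The key object is $B=\bigl(\sum_j s_j s_j^t\bigr)\bigl(\sum_j r_j r_j^t\bigr)=:S_Y S_X$, where $S_Y=\frac1T Y^t Y$ and $S_X=\frac1T X^t X$ are $T\times T$ Gram matrices built from overlapping windows of the same scalar sequence. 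First I would establish a self-averaging (concentration) step: show $x_T(\alpha)-\E x_T(\alpha)\to 0$ almost surely, via a martingale difference decomposition with respect to the filtration generated by successively adding columns $\veps_0,\veps_1,\dots$, using the Burkholder inequality. This part is routine modulo the dependence bookkeeping.

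Next, the core of the argument is deriving the fixed-point system that $\E x_T(\alpha)$ satisfies in the limit. Here I would not expect a single MP-type equation but rather a coupled system, because the resolvent of $B$ naturally forces one to track two auxiliary transforms — heuristically, one associated with "how $S_X$ sees the randomness" and one for "how $S_Y$ sees it" — and the overlap of the windows links them. Concretely I would introduce quantities like $\frac1T\operatorname{tr}(S_X(B-\alpha I)^{-1})$ and $\frac1T\operatorname{tr}(S_Y(B-\alpha I)^{-1})$, apply the rank-one perturbation / Sherman–Morrison formula to peel off one $r_j r_j^t$ (respectively $s_j s_j^t$), and use the moment conditions in assumption (a) together with the $o(1)$ Lindeberg-type bound to control the quadratic-form fluctuations $r_j^t M r_j - \frac1T\operatorname{tr} M$. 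The time-shift means that when I peel a row of $X$ I simultaneously disturb the corresponding row of $Y$, so the rank-one updates are genuinely rank-two in the relevant sense; handling this cross term cleanly is where the "new techniques" promised in the introduction enter. Carrying this through, one obtains a closed system of polynomial equations in $x$, the two auxiliary transforms, $\alpha$ and $c$; eliminating the auxiliary unknowns (a resultant computation) yields the single cubic $\alpha^2 x^3 - 2\alpha(c-1)x^2 + (c-1)^2 x - \alpha x - 1 = 0$ of \eqref{eq1}. This elimination step is the main obstacle: one must verify that the limiting system has a unique solution branch with $\mathfrak{Im}\,x>0$ for $\alpha\in\mathbb C^+$ (so that $x$ is genuinely a Stieltjes transform of a probability measure), and that no spurious roots of the cubic are introduced.

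For part (2), once \eqref{eq1} is in hand I would invoke the standard inversion formula $f(u)=\frac1\pi\lim_{v\downarrow 0}\mathfrak{Im}\,x(u+iv)$. Since the cubic is explicit, Cardano's formula gives the three roots as functions of $\alpha$; selecting the root with positive imaginary part on $\mathbb C^+$ and sending $\alpha\to u\in\mathbb R$ produces the closed-form density $f(u)$, with the quantity $d(u)$ appearing as the discriminant-type radical inside the cube roots. The expression under the outer square root is, up to the explicit algebra, $-\mathfrak{Im}$ of one Cardano root squared plus the real correction terms; I would present this as a direct (if lengthy) computation rather than reproduce it. Finally, the support endpoints $a,b$ are found as the real values of $u$ at which the density vanishes, i.e. where two roots of the cubic collide — equivalently where the discriminant of \eqref{eq1} in $x$ vanishes; solving that discriminant equation in $u$ gives $a,b=\frac18(-1+20c+8c^2\mp(1+8c)^{3/2})$, and one checks $a>0$ exactly when $c>1$ (and $a\le 0$, so the support reaches $0$, when $0<c\le1$), which accounts for the two cases. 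The verification that $\b F$ is indeed a probability measure (total mass one) follows from the $\alpha\to\infty$ asymptotics $x(\alpha)\sim -1/\alpha$ extracted from \eqref{eq1}.
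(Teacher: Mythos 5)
Your overall roadmap is the same as the paper's: truncate/centre/rescale, work with $B=\bigl(\sum_j s_j s_j^t\bigr)\bigl(\sum_j r_j r_j^t\bigr)$, peel rank-one (really rank-two, as you observe) blocks with Sherman--Morrison, collect a coupled system of fixed-point equations for the Stieltjes transform and one (or two) auxiliary traces, eliminate, and then invert the resulting cubic by Cardano. The density and support discussion for part~(2) is also essentially what the paper does. But there is one genuine gap in your proposal, and it is precisely the ``new technique'' you defer to.

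When you peel the $j$-th row from both $S_X$ and $S_Y$ simultaneously, the resulting expansion does not close after introducing $y_0=\frac1T\operatorname{tr}(\tilde C B^{-1}(\alpha))$ and $z_0=\frac1T\operatorname{tr}(B^{-1}(\alpha)C)$. A third family of quantities appears in the denominators, namely the shift-traces $\frac1T\operatorname{tr}\bigl(B^{-1}(\alpha)P_1^k\bigr)$ and $\frac1T\operatorname{tr}\bigl(\tilde C B^{-1}(\alpha)P_1^k\bigr)$, where $P_1$ is the one-step shift matrix with $s_j=P_1 r_j$. You acknowledge this cross term but do not say how to dispose of it, and your claim that ``carrying this through, one obtains a closed system \ldots; eliminating the auxiliary unknowns \ldots yields the single cubic'' is therefore not justified: with the shift-trace left as an unknown the system is underdetermined and no resultant computation will produce~\eqref{eq1}. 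The paper's Lemma~\ref{lem1} resolves this by setting $x_k=\frac1T\operatorname{tr}(B^{-1}(\alpha)P_1^k)$, $y_k=\frac1T\operatorname{tr}(\tilde C B^{-1}(\alpha)P_1^k)$, deriving a two-term linear recursion in $k$ linking $(x_k,y_k)$ to $(x_{k+1},y_{k-1})$, and then exploiting the boundary condition $x_T=y_T=0$ (because $P_1^T=0$) to propagate $x_k=y_k=o_{a.s.}(1)$ for all $1\le k\le T-1$. Only after this vanishing is established do the two equations in $(x_0,y_0)$ close and reduce to $1+\alpha x_0=\frac{cy_0}{1+y_0}$, $y_0=\frac{cx_0}{1+y_0}$, whence the cubic. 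Without supplying this argument (or an equivalent one) your derivation of~\eqref{eq1} has a hole.

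Two smaller remarks. First, you normalize the Stieltjes transform as $\frac1p\operatorname{tr}(B-\alpha I)^{-1}$; since $B$ is $T\times T$, equation~\eqref{eq1} is for $\frac1T\operatorname{tr}(B-\alpha I)^{-1}$ (the $\frac1p$-normalization corresponds to $A$ and gives Proposition~\ref{pro2} instead). Second, your self-averaging step via a martingale/Burkholder argument is a legitimate alternative to what the paper actually does; the paper instead relies on the quadratic-form concentration $s_j^t M s_j=\frac1T\operatorname{tr}M+o_{a.s.}(1)$ (Lemma~\ref{lem3}, following Bai--Silverstein's Lemma~9.1), applied uniformly in $j$ inside the peeling. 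Both routes are standard; neither is the bottleneck. The bottleneck is the shift-trace recursion described above.
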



It's easy to check that when $c<1$, the
  LSD of $B$ has a point mass $1-c$ at the origin since
  $\mathrm{rank}(B)=p<T$ for large $p$ and $T$, and at the same time
  we have 
  \begin{equation*}
    \left\{
      \begin{array}{l}\displaystyle
        \int_0^{b}f(u)du=c,\quad 0<c<1,\\[1.5mm]\displaystyle
        \int_{a}^{b}f(u)du=1,\quad c\geq 1.
      \end{array}
    \right.
  \end{equation*}

Since the matrix $A$ we are interested in has the same
non-zero eigenvalues with $B$, the following proposition holds.
\begin{proposition}\label{pro2}
  Under the conditions of Theorem \ref{th1}, the ESD of $A$
  converges a.s. to a non-random limit distribution
  \[F=\frac{1}{c}\b{F}+(1-\frac{1}{c})\delta_0,\] whose Stieltjes
  transform $y=y(\alpha)$, $\alpha\in \mathbb{C}\setminus \mathbb{R}$,
  satisfies the equation
  \[\alpha^2 c^2 y^3+\alpha c(c-1)y^2-\alpha y-1=0.\]
  In particular, $F$ has the density function
  \begin{equation*}
    \left\{
      \begin{array}{l}\displaystyle
        \frac{1}{c}f(u),~u\in (0,b],\mbox{ for }0<c<1,\\[1.5mm]\displaystyle
        \frac{1}{c}f(u),~u\in [a,b],\mbox{ for }c\geq 1.
      \end{array}
    \right.
  \end{equation*}
  where in the later case $c\geq 1$, $F$ has an additional mass
  $(1-\frac{1}{c})$ at the origin.
\end{proposition}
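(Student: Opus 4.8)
The plan is to deduce Proposition~\ref{pro2} directly from Theorem~\ref{th1} and the elementary fact, already noted above, that $A$ and $B$ have the same nonzero eigenvalues. To make that fact quantitative, write $A=\frac{1}{T^{2}}PQ$ with $P=X$ of size $p\times T$ and $Q=Y^{t}YX^{t}$ of size $T\times p$; then $B=\frac{1}{T^{2}}QP$, so $PQ$ and $QP$, hence $A$ and $B$, share the same nonzero eigenvalues with the same multiplicities. Consequently, for every realization and every $p,T$ the empirical spectral distribution functions satisfy the exact identity
\[
  p\,F^{A}=T\,F^{B}+(p-T)\,\delta_{0},
  \qquad\text{equivalently}\qquad
  F^{A}=\tfrac{T}{p}\,F^{B}+\bigl(1-\tfrac{T}{p}\bigr)\delta_{0}.
\]
Letting $p,T\to\infty$ with $p/T\to c$ and using $F^{B}\to\b{F}$ a.s.\ from Theorem~\ref{th1}(1), the right-hand side converges weakly, almost surely, to $F:=\tfrac1c\b{F}+(1-\tfrac1c)\delta_{0}$. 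That this is a genuine probability law I would check case by case: for $c\ge1$, $\b{F}$ has no atom and $F$ carries mass $1-\tfrac1c\ge0$ at the origin; for $0<c<1$, the atom of mass $1-c$ that $\b{F}$ carries at $0$ (see the remark after Theorem~\ref{th1}) scales to $\tfrac1c(1-c)=\tfrac1c-1$, which is exactly cancelled by the term $1-\tfrac1c$, so $F$ has no atom---consistent with $A=C_{1}C_{1}^{t}$ being a.s.\ nonsingular when $p<T$; the total absolutely continuous mass matches the integrals of $f$ quoted after Theorem~\ref{th1}. This gives the first assertion.

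For the Stieltjes transform, since the transform of $\delta_{0}$ is $-1/\alpha$, the transform $y=y(\alpha)$ of $F$ and $x=x(\alpha)$ of $\b{F}$ are related by $y=\tfrac1c x-\tfrac{c-1}{c\alpha}$, i.e.\ $x=cy+\tfrac{c-1}{\alpha}$. Substituting this into \eqref{eq1} and grouping by powers of $cy$, the contributions of order $(cy)^{1}$ coming from $\alpha^{2}x^{3}$, $-2\alpha(c-1)x^{2}$ and $(c-1)^{2}x$ have coefficients $3(c-1)^{2}$, $-4(c-1)^{2}$, $(c-1)^{2}$, which sum to zero; the constant terms of the form $(c-1)^{3}/\alpha$ cancel ($1-2+1=0$) and the remaining constants combine to $-(c-1)-1=-c$. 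After dividing by $c$ one is left with
\[
  \alpha^{2}c^{2}y^{3}+\alpha c(c-1)y^{2}-\alpha y-1=0,
\]
the asserted cubic. Finally, the density follows directly from Theorem~\ref{th1}(2) and $F=\tfrac1c\b{F}+(1-\tfrac1c)\delta_{0}$: on the support of its absolutely continuous part---which is $(0,b]$ for $0<c<1$ and $[a,b]$ for $c\ge1$, inherited from $\b{F}$---the density equals $\tfrac1c f(u)$, with an extra point mass $1-\tfrac1c$ at the origin exactly when $c\ge1$.

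I do not expect a genuine obstacle here: the proposition is essentially a corollary of Theorem~\ref{th1}. The only points that require care are the bookkeeping of the mass at $0$ in the identity $pF^{A}=TF^{B}+(p-T)\delta_{0}$ together with the verification that the limit is a probability measure in both regimes $c<1$ and $c\ge1$, and the algebraic substitution $x=cy+(c-1)/\alpha$ in \eqref{eq1}, which is routine provided the telescoping cancellations are tracked correctly.
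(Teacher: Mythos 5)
Your proposal is correct and follows essentially the same route as the paper: identify $F=\tfrac1c\b{F}+(1-\tfrac1c)\delta_0$ from the shared nonzero spectra of $A$ and $B$, translate this into the relation $x=cy+\tfrac{c-1}{\alpha}$ between Stieltjes transforms, and substitute into \eqref{eq1}; your telescoping cancellations reproduce the asserted cubic exactly. The only difference is that you make explicit the finite-sample identity $pF^{A}=TF^{B}+(p-T)\delta_{0}$ and the mass bookkeeping at the origin, which the paper simply asserts.
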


The following details the density function of LSD
of $A$ for different values of c.
\par
\begin{itemize}
\item When $c=1$, the support is $0\leq u\leq
  \frac{27}{4}$ and the density function is
  \[\frac{1}{c}f(u)=\frac{1}{\pi
    u}\left[-u+3\left(\frac{u}{2^{2/3}d(u)^{1/3}}+\frac{d(u)^{1/3}}{6\times2^{1/3}}\right)^2\right]^{1/2},\]
  where $d(u)=27u+3\sqrt{3}\times\sqrt{u(-4u^2+27u)}$. It's easy to
  see that as $u\rightarrow 0_{+}$, $f(u)\rightarrow\infty$.

\item If $c<1$, it can be seen from the explicit form of $f(u)$ that
  when $u\gt 0_{+}$, $\frac{1}{c}f(u)\rightarrow\infty$ because the $u$ in the
  denominator of the density function cannot be completely canceled
  out.

\item If $c>1$, the shape of the density function turns out to be a
  little different from the case $c\leq 1$. Nevertheless it's quite
  intuitive because the lower bound of the support is positive and the
  density function is bounded.\end{itemize}

   The density functions of LSD of A for $c=0.5,1,2,3$
  are displayed on Figure \ref{fig1}.

  \begin{figure}[ht]
    \centering
    \includegraphics[width=1.0\textwidth]{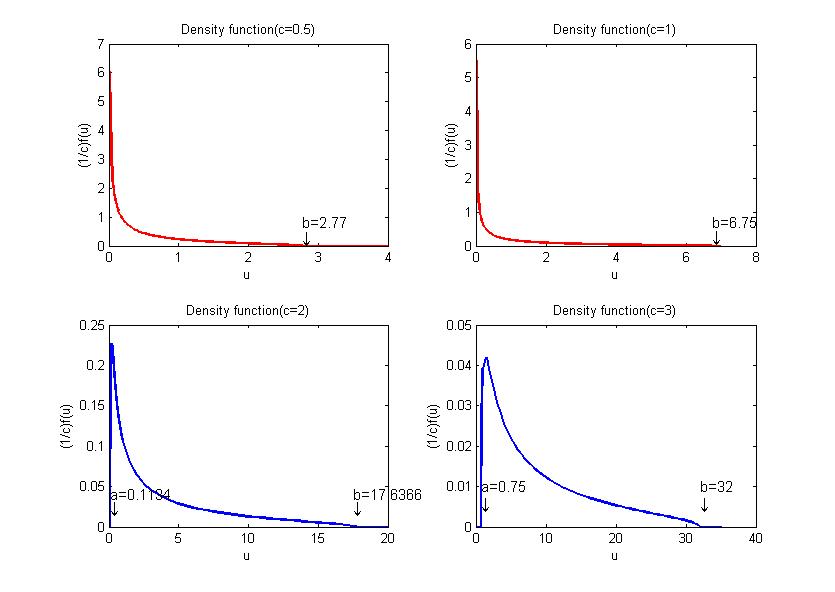}\\
    \caption{Density plots of the LSD of B.Top to bottom and left to
      right: c=0.5,1,2 and 3, respectively}
    \label{fig1}
  \end{figure}

\section{Proofs}
\label{proofs}

\subsection{Proof of Theorem \ref{th1}}
The proof of the theorem follows the general strategy based on the
Stieltjes transform as presented
in \citet{Silv95}, \citet{BS10} and
\citet{PasShc10}.
However, the random matrix B here is
no more a covariance matrix as considered in these references. Almost
all the steps of the proof need new arguments and ideas compared to
the case of sample covariance matrices considered so far in the literature. Following
this method, the first step is to truncate the entries
$\{\varepsilon_{jt}\}$ at a convenient rate using Assumption
(a). After truncation and the follow-up steps of centralization and
standardization, we may assume that
\[
|\varepsilon_{ij}|\leq\eta T^{1/4},\quad
\mathbb{E}\left(\varepsilon_{ij}\right)=0,\quad
Var\left(\varepsilon_{ij}\right)=1,\quad \sup_{1\leq i\leq p,0\leq
  j\leq T}\mathbb{E}\left(|\varepsilon_{ij}|^{4}\right)<M.
\]
\noindent
The details of these technical steps are given in Appendix A.

\indent By the rank inequality(Theorem A.44 of \cite{BS10}), it is enough to consider
$$B=\sum_{j=1}^{p}s_{j}s_{j}^{t}\sum_{j=1}^{p}r_{j}r_{j}^{t}=P_1\tilde{C}P_1^t\tilde{C},$$
where
\[
s_{j}=P_{1}r_{j}=\frac{1}{\sqrt{T}}(0,\varepsilon_{j1},\cdots,\varepsilon_{j,T-1})^t,
\quad C=\sum_{j=1}^ps_js_j^t, \quad \tilde{C}=\sum_{j=1}^{p}r_{j}r_{j}^{t},\quad
P_{1}=\left(\begin{array}{cc}
    \textbf{0} & 0\\
    \textbf{I}_{T-1} & \textbf{0}
  \end{array}\right).
\]
\noindent
At this stage, the important observation is that here we have replaced
$s_j=\frac{1}{\sqrt{T}}(\varepsilon_{j0},\varepsilon_{j1},\cdots,\varepsilon_{j,T-1})^t$
by
$\tilde{s}_j=\frac{1}{\sqrt{T}}(0,\varepsilon_{j1},\cdots,\varepsilon_{j,T-1})^t$
without altering the LSD of B since when $T\gt\infty$, the effect of
this substitution will vanish. For the sake of convenience, we still
use $s_j$ to denote $\tilde{s}_j$.

For $\alpha\in \mathbb{C}\setminus\mathbb{R}$, define
$$B\left(\alpha\right)=\sum_{j=1}^{p}s_{j}s_{j}^{t}\sum_{j=1}^{p}r_{j}r_{j}^{t}-\alpha I_{T}.$$
Let
$$x_0=\frac{1}{T}tr(B^{-1}(\alpha)),\quad y_0=\frac{1}{T}tr(\tilde{C}B^{-1}(\alpha)),\quad z_0=\frac{1}{T}tr(B^{-1}(\alpha)C).$$
The method consists in finding a system of two asymptotic equations
satisfied by $x_0$ and $y_0$. Solving the system yields an asymptotic
equivalent for $x_0$ and finally leads to the equation \eqref{eq1} satisfied
by the limit of $x_0$. Nonetheless, $x_0$ is the Stieltjes transform of the matrix B which can be recovered from the inversion formula.

Let
$$
B_{j}\left(\alpha\right)=\sum_{k\neq j}s_{k}s_{k}^{t}\sum_{i\neq
  j}r_{i}r_{i}^{t}-\alpha I_{T},\quad C_j=C-s_js_j^t,\quad \tilde{C}_j=\tilde{C}-r_jr_j^t,\quad 1\leq j\leq
p,$$ then
\begin{align*}
B\left(\alpha\right)&=B_{j}\left(\alpha\right)+\sum_{i\neq
  j}s_{j}s_{j}^{t}r_{i}r_{i}^{t}+\sum_{k\neq
  j}s_{k}s_{k}^{t}r_{j}r_{j}^{t}+s_{j}s_{j}^{t}r_{j}r_{j}^{t}\\
  &=B_j\left(\alpha\right)+s_js_j^t\tilde{C}_j+C_jr_jr_j^t+s_js_j^tr_jr_j^t.
\end{align*}

We have
\[
I_{T}=B(\alpha)B^{-1}(\alpha)=\left(\sum_{j=1}^{p}s_{j}s_{j}^{t}\right)\left(\sum_{j=1}^{p}r_{j}r_{j}^{t}\right)B^{-1}\left(\alpha\right)-\alpha
B^{-1}\left(\alpha\right).
\]
\noindent
Taking trace and dividing both sides by $T$, we get

\begin{equation}\label{eq:1}
  1=\frac{1}{T}\sum_{j=1}^{p}s_{j}^{t}\tilde{C}B^{-1}\left(\alpha\right)s_{j}-\alpha\frac{1}{T}tr\left(B^{-1}\left(\alpha\right)\right).
\end{equation}
Note that $x_0=\frac{1}{T}tr(B^{-1}(\alpha))$ is the Stieltjes
transform of the ESD of the matrix B, and its limit will be found by
letting $p,T\rightarrow\infty$ on both sides of the equation.

Consider  $s_{j}^{t}\tilde{C} B^{-1}\left(\alpha\right)s_{j}$, using the identities
\[
\left(B+\sum_{j=1}^{m}ab_{j}^{t}\right)^{-1}a=\frac{B^{-1}a}{1+\sum_{j=1}^{m}b_{j}^{t}B^{-1}a},
\]
and
\[
B^{-1}-D^{-1}=B^{-1}\left(D-B\right)D^{-1},
\]
\noindent
we have
\begin{align*}
  s_{j}^{t}\tilde{C}B^{-1}\left(\alpha\right)s_{j}&=\frac{s_{j}^{t}\tilde{C}\left(B_{j}\left(\alpha\right)+C_{j}r_jr_j^t\right)^{-1}s_{j}}{1+s_{j}^{t}\tilde{C}\left(B_{j}\left(\alpha\right)+C_{j}r_jr_j^t\right)^{-1}s_{j}}\\
 &=1-\frac{1}{1+s_j^t\tilde{C}_j\left(B_{j}\left(\alpha\right)+C_{j}r_jr_j^t\right)^{-1}s_j+s_j^tr_jr_j^t\left(B_{j}\left(\alpha\right)+C_{j}r_jr_j^t\right)^{-1}s_j}\\
  & := 1- \frac{1}{1+L_1+L_2},
\end{align*}
\noindent
where $L_1$ and $L_2$ are explicitly defined.

\noindent
For $L_1$, by Lemma \ref{lem3}, or equivalently by Lemma 2.7 of \cite{BS98}, we have
\begin{eqnarray*}
  L_1&=&s_{j}^{t}\tilde{C}_j\left(B_{j}\left(\alpha\right)+C_{j}r_jr_j^t\right)^{-1}s_{j} \\
  & = & s_{j}^{t}\tilde{C}_j B_{j}^{-1}\left(\alpha\right)s_{j}- s_{j}^{t}\tilde{C}_j B_{j}\left(\alpha\right)^{-1}C_{j}r_jr_j^t\left(B_{j}\left(\alpha\right)+C_{j}r_jr_j^t\right)^{-1}s_{j}\\
  & = & s_{j}^{t}\tilde{C}_j B_{j}^{-1}\left(\alpha\right)s_{j}-\frac{s_{j}^{t}\tilde{C}_j B_{j}^{-1}\left(\alpha\right)C_{j}r_jr_j^tB_{j}\left(\alpha\right)^{-1}s_{j}}{1+r_j^tB_{j}^{-1}\left(\alpha\right)C_{j}r_j}\\
  & = & \frac{1}{T}tr\left(\tilde{C}_j B_{j}^{-1}\left(\alpha\right)\right)-\frac{\frac{1}{T}tr\left(\tilde{C}_j B_{j}^{-1}\left(\alpha\right)C_{j}P_{1}^{t}\right)\cdot\frac{1}{T}tr\left(B_{j}^{-1}\left(\alpha\right)P_{1}\right)}{1+\frac{1}{T}tr\left(B_{j}\left(\alpha\right)^{-1}C_{j}\right)}+o_{a.s.}(1).
\end{eqnarray*}
\noindent

For $L_2$,  we have
\begin{align*}
  L_2&=s_{j}^{t}r_{j}r_{j}^{t}\left(B_{j}\left(\alpha\right)+C_{j}r_jr_j^t\right)^{-1}s_{j}=s_{j}^{t}r_{j}r_{j}^{t}B_{j}^{-1}\left(\alpha\right)s_{j}-\frac{s_{j}^{t}r_{j}r_{j}^{t}B_{j}^{-1}\left(\alpha\right)C_jr_jr_j^tB^{-1}_j(\alpha)s_{j}}{1+r_j^tB_{j}^{-1}\left(\alpha\right)C_{j}r_{j}}\\
  &=\left(s_{j}^{t}P_{1}^{t}s_{j}\right)\cdot\frac{1}{T}tr\left(B_{j}^{-1}\left(\alpha\right)P_{1}\right)-\frac{\left(s_{j}^{t}P_{1}^{t}s_{j}\right)\cdot\frac{1}{T}tr\left(B_{j}^{-1}(\alpha)C_j\right)\cdot\frac{1}{T}tr\left(B_{j}^{-1}\left(\alpha\right)P_{1}\right)}{1+\frac{1}{T}tr\left(B_{j}^{-1}\left(\alpha\right)C_{j}\right)}+o_{a.s.}(1)=o_{a.s.}(1).
\end{align*}
\noindent

Therefore, by equation $\eqref{eq:1}$, we have
\begin{align}
&~1+\alpha\frac{1}{T}tr(B^{-1}(\alpha))=o_{a.s.}(1)+ \label{yt1}\\
&\frac{p}{T}\left(1-\dfrac{1+\dfrac{1}{T}tr(B^{-1}(\alpha)C)}{\left(1+\dfrac{1}{T}tr\left(B^{-1}(\alpha)C\right)\right)\left(1+\dfrac{1}{T}tr(\tilde{C}B^{-1}(\alpha))\right)-\dfrac{1}{T}tr\left(\tilde{C} B^{-1}\left(\alpha\right)CP_{1}^{t}\right)\cdot\dfrac{1}{T}tr\left(B^{-1}\left(\alpha\right)P_{1}\right)}\right)\nonumber
\end{align}

Here, we have used the following equivalents, uniformly in $j$, as $p,T\rightarrow\infty$,
$$\frac{1}{T}tr\left(B_{j}^{-1}\left(\alpha\right)C_{j}\right)=z_{0}+o_{a.s.}(1),$$
$$\frac{1}{T}tr\left(B_{j}^{-1}\left(\alpha\right)\right)=x_0+o_{a.s.}(1),$$ $$\frac{1}{T}tr\left(\tilde{C}_jB_{j}^{-1}\left(\alpha\right)\right)=y_{0}+o_{a.s.}(1).$$
\noindent

Similar to equation \eqref{eq:1}, we have

\begin{equation}
  1=\frac{1}{T}\sum_{j=1}^{p}r_{j}^{t}B^{-1}\left(\alpha\right)Cr_{j}-\alpha\frac{1}{T}tr\left(B^{-1}\left(\alpha\right)\right).\label{eq:2}
\end{equation}

Considering  $r_{j}^{t} B^{-1}\left(\alpha\right)C r_{j}$, we have
\begin{align*}
  r_{j}^{t}B^{-1}\left(\alpha\right)Cr_{j}&=\frac{r_{j}^{t}\left(B_{j}\left(\alpha\right)+s_j s_j^t \tilde{C}_{j}\right)^{-1}Cr_{j}}{1+r_{j}^{t}\left(B_{j}\left(\alpha\right)+s_j s_j^t \tilde{C}_{j}\right)^{-1}Cr_{j}}\\
 &=1-\frac{1}{1+r_j^t\left(B_{j}\left(\alpha\right)+s_j s_j^t\tilde{C}_j\right)^{-1}C_jr_j+r_j^t\left(B_{j}\left(\alpha\right)+s_js_j^t\tilde{C}_j\right)^{-1}s_j s_j^tr_j}\\
  & := 1- \frac{1}{1+W_1+W_2},
\end{align*}
\noindent
where $W_1$ and $W_2$ are explicitly defined.

\noindent
For $W_1$, we have
\begin{eqnarray*}
  W_1&=&r_{j}^{t}\left(B_{j}\left(\alpha\right)+s_j s_j^t\tilde{C}_j\right)^{-1}C_{j}r_{j} \\
  & = & r_{j}^{t}B_{j}^{-1}\left(\alpha\right)C_jr_{j}- r_{j}^{t} B_{j}^{-1}\left(\alpha\right)s_j s_j^t\tilde{C}_j\left(B_{j}\left(\alpha\right)+s_j s_j^t\tilde{C}_j\right)^{-1}C_{j}r_{j}\\
  & = & r_{j}^{t}B_{j}^{-1}\left(\alpha\right)C_jr_{j}-\dfrac{r_j^tB_{j}^{-1}\left(\alpha\right)s_{j}s_{j}^{t}\tilde{C}_j B_{j}^{-1}\left(\alpha\right)C_{j}r_j}{1+s_j^t \tilde{C}_{j}B_{j}^{-1}\left(\alpha\right)s_j}\\
  & = & \frac{1}{T}tr\left(B_{j}^{-1}\left(\alpha\right)C_j\right)-\frac{\frac{1}{T}tr\left(\tilde{C}_j B_{j}^{-1}\left(\alpha\right)C_{j}P_{1}^{t}\right)\cdot\frac{1}{T}tr\left(B_{j}^{-1}\left(\alpha\right)P_{1}\right)}{1+\frac{1}{T}tr\left(\tilde{C}_{j}B_{j}\left(\alpha\right)^{-1}\right)}+o_{a.s.}(1).
\end{eqnarray*}

\noindent
For $W_2$,  we have
\begin{align*}
  W_2&=r_{j}^{t}\left(B_{j}\left(\alpha\right)+s_js_j^t\tilde{C}_{j}\right)^{-1}s_{j}s_{j}^{t}r_{j}=r_{j}^{t}B_{j}^{-1}\left(\alpha\right)s_{j}s_{j}^{t}r_{j}-\frac{r_{j}^{t}B_{j}^{-1}\left(\alpha\right)s_js_j^t\tilde{C}_jB^{-1}_j(\alpha)s_{j}s_{j}^{t}r_{j}}{1+s_j^t\tilde{C}_jB^{-1}_j(\alpha)s_{j}}\\
  &=\left(s_{j}^{t}P_{1}^{t}s_{j}\right)\cdot\frac{1}{T}tr\left(B_{j}^{-1}\left(\alpha\right)P_{1}\right)-\frac{\left(s_{j}^{t}P_{1}^{t}s_{j}\right)\cdot\frac{1}{T}tr\left(\tilde{C}_j B_{j}^{-1}(\alpha)\right)\cdot\frac{1}{T}tr\left(B_{j}^{-1}\left(\alpha\right)P_{1}\right)}{1+\frac{1}{T}tr\left(\tilde{C}_{j}B_{j}^{-1}\left(\alpha\right)\right)}+o_{a.s.}(1)=o_{a.s.}(1).
\end{align*}
\noindent

Therefore, by equation $\eqref{eq:2}$, we have
\begin{align}
&~1+\alpha\frac{1}{T}tr(B^{-1}(\alpha))=o_{a.s.}(1)+ \label{yt2}\\
&\frac{p}{T}\left(1-\dfrac{1+\dfrac{1}{T}tr(B^{-1}(\alpha)\tilde{C})}{\left(1+\dfrac{1}{T}tr\left(B^{-1}(\alpha)C\right)\right)\left(1+\dfrac{1}{T}tr(\tilde{C}B^{-1}(\alpha))\right)-\dfrac{1}{T}tr\left(\tilde{C} B^{-1}\left(\alpha\right)CP_{1}^{t}\right)\cdot\dfrac{1}{T}tr\left(B^{-1}\left(\alpha\right)P_{1}\right)}\right)\nonumber
\end{align}

\noindent
Thus, according to equation \eqref{yt1} and \eqref{yt2}, we have
\[\dfrac{1}{T}tr(B^{-1}(\alpha)\tilde{C})=\dfrac{1}{T}tr(B^{-1}(\alpha)C)+o_{a.s.}(1).\]

By Lemma \ref{lem1}, the second term is $o_{a.s.}(1)$ since both
$\frac{1}{T}tr\left(P_{1}^{t}\tilde{C}_j B_{j}\left(\alpha\right)^{-1}C_{j}\right)$
and $\frac{1}{T}tr\left(B_{j}\left(\alpha\right)^{-1}C_{j}\right)$ are
non-negative and bounded as $p,T\rightarrow\infty$.
\[
  L_1 =\frac{1}{T} tr\left(\tilde{C}_jB_j^{-1}(\alpha)\right)+o_{a.s.}(1)=y_0+o_{a.s.}(1).
\]
\noindent

Finally, by equation $\eqref{eq:2}$, we find
\begin{equation}\label{eq3}
  1+\alpha x_0=\frac{p}{T}\left(1-\frac{1}{1+y_0}\right)+o_{a.s.}(1).
\end{equation}

To find a second equation satisfied by $x_0$ and $y_0$, using Lemma
\ref{lem3} and Lemma \ref{lem1},

\begin{align*}
  \frac{1}{T}tr(\tilde{C}B^{-1}(\alpha))&=\frac{1}{T}tr(\sum_{j=1}^pr_jr_j^tB^{-1}(\alpha))=\frac{1}{T}\sum_{j=1}^pr_j^tB^{-1}(\alpha)r_j\\
  &=\frac{1}{T}\sum_{j=1}^p\dfrac{r_j^t\left(B_j(\alpha)+s_js_j^t\tilde{C}_j\right)^{-1}r_j}{1+r_j^t\left(B_j(\alpha)+s_js_j^t\tilde{C}_j\right)^{-1}C_jr_j+r_j^t\left(B_j(\alpha)+s_js_j^t\tilde{C}_j\right)^{-1}s_js_j^tr_j}\\
   &=\frac{1}{T}\sum_{j=1}^p\dfrac{r_j^tB_j^{-1}(\alpha)r_j-\dfrac{r_j^tB_j^{-1}(\alpha)s_js_j^t\tilde{C}_jB_j^{-1}(\alpha)r_j}{1+s_j^t\tilde{C}_jB_j^{-1}(\alpha)s_j}}{1+r_j^tB^{-1}_j(\alpha)C_jr_j-\dfrac{r_j^tB^{-1}_j(\alpha)s_js_j^t\tilde{C}_jB^{-1}_j(\alpha)C_jr_j}{1+s_j^t\tilde{C}_jB^{-1}_j(\alpha)s_j}}+o_{a.s.}(1)\\
  &=\frac{p}{T}\cdot\dfrac{\dfrac{1}{T}tr(B^{-1}(\alpha))}{1+\dfrac{1}{T}tr(B^{-1}(\alpha)C)}+o_{a.s.}(1).
\end{align*}
\noindent
This leads to
\begin{equation}\label{eq4}
  y_0=\frac{p}{T}\cdot\frac{x_0}{1+y_0}+o_{a.s.}(1).
\end{equation}

In conclusion, $(x_0,y_0)$ satisfy the system

\[
\begin{cases}\displaystyle
  1+\alpha x_0=\frac{cy_0}{1+y_0}+o_{a.s.}(1),\\[1.5mm]\displaystyle
  y_0=\frac{cx_0}{1+y_0}+o_{a.s.}(1).
\end{cases}
\]
\noindent
Notice that for any $T,~|x_0|\leq \frac{1}{|\mathfrak{Im}(\alpha)|}$ is bounded, and by equation \eqref{eq4}, $|y_0|$ is also bounded as $T\rightarrow \infty$, otherwise \eqref{eq4} may not hold. Therefore, both $\{x_0\}$ and $\{y_0\}$ are bounded sequences. Let be two subsequences $\{x_{t_n}\},\{y_{t_n}\}$ so that $x_{t_n}\rightarrow x$ and $y_{t_n}\rightarrow y$ as $n\rightarrow\infty$. It can be concluded that the limiting functions
$(x,y)$ satisfy the system of equations:

\[
\begin{cases}\displaystyle
  1+\alpha x=\frac{cy}{1+y}&(1)\\[1.5mm]\displaystyle
  y=\frac{cx}{1+y}&(2)
\end{cases}
\]
\noindent

By eliminating $y$, we finally find the equation \eqref{eq1} satisfied by the limiting
function $x$. Denote by $\mathcal{F}$ all the analytical functions $\{f:~\mathbb{C}^+\mapsto \mathbb{C}^+\}$. Because according to the following proof we have one unique solution on $\mathcal{F}$ that satisfies equation \eqref{eq1}, the whole bounded sequence $\{x_0\}$ has one unique limit $x$ in $\mathcal{F}$.



As for the second statement of Theorem \ref{th1}, in order to find the density function of the LSD $\b{F}$ of
$B$, we use the inversion formula:

\[
f\left(u\right)=\lim_{\varepsilon\gt 0_{+}}\frac{1}{\pi}{\mathfrak{Im}
  }x\left(u+i\varepsilon\right)
\]
\noindent
where $x\left(\cdot\right)$ is the Stieltjes transform of $\b{F}$. Write
$$\lim_{\varepsilon\gt 0_{+}}x(u+i\varepsilon)=\psi(u)+i\phi(u),$$
both $\psi$ and $\phi$ are real-valued functions of $u$. By
substituting $\alpha=u+i\varepsilon$, $x=\psi+i\phi$ into equation
\eqref{eq1} and letting $\varepsilon\gt 0_{+}$, both the real part and the imaginary part of the LHS of equation \eqref{eq1}
should equal to 0, i.e.
\[
\begin{cases}\displaystyle
  u^{2}{\psi}^{3}-3u^{2}\psi\cdot {\phi}^{2}-2u\left(c-1\right)\left({\psi}^{2}-{\phi}^{2}\right)-\left(u-\left(c-1\right)^{2}\right)\psi-1=0 & \left(3\right)\\[1.5mm]\displaystyle
  -u^{2}{\phi}^{2}+3u^{2}{\psi}^{2}-4u\left(c-1\right)\psi-\left(u-\left(c-1\right)^{2}\right)=0
  & \left(4\right)
\end{cases}
\]
By plugging in (4) into (3), we get
\[-8u^2\psi^3+16u(c-1)\psi^2+(2u-10(c-1)^2)\psi+\frac{2(c-1)^3}{u}-2c+1=0.\]
Solving this equation and substituting for $\psi$ in (4), we get
\begin{align*}
  {\phi}_1^2(u)&=\frac{1}{u^2}\left\{-u-\frac{5(c-1)^2}{3}+\frac{2^{4/3}(3u+(c-1)^2)(c-1)}{3d(u)^{1/3}}+\frac{2^{2/3}(c-1)d(u)^{1/3}}{3}\right.\\
  &\quad +\left.\frac{1}{48}\left[-8(c-1)+\frac{2\times
        2^{1/3}(3u+(c-1)^2)}{d(u)^{1/3}}+2^{2/3}d(u)^{1/3}\right]^2\right\},
\end{align*}
\begin{align*}
  {\phi}_2^2(u)&=\frac{1}{u^2}\left\{-u-\frac{5(c-1)^2}{3}+\frac{1+i\sqrt{3}}{2}\cdot\frac{2^{4/3}(3u+(c-1)^2)(c-1)}{3d(u)^{1/3}}+\frac{1-i\sqrt{3}}{2}\cdot\frac{2^{2/3}(c-1)d(u)^{1/3}}{3}\right.\\
  &\quad +\left.\frac{1}{48}\left[-8(c-1)+\frac{1+i\sqrt{3}}{2}\cdot\frac{2\times
        2^{1/3}(3u+(c-1)^2)}{d(u)^{1/3}}+\frac{1-i\sqrt{3}}{2}\cdot2^{2/3}d(u)^{1/3}\right]^2\right\},
\end{align*}
\begin{align*}
  {\phi}_3^2(u)&=\frac{1}{u^2}\left\{-u-\frac{5(c-1)^2}{3}+\frac{1-i\sqrt{3}}{2}\cdot\frac{2^{4/3}(3u+(c-1)^2)(c-1)}{3d(u)^{1/3}}+\frac{1+i\sqrt{3}}{2}\cdot\frac{2^{2/3}(c-1)d(u)^{1/3}}{3}\right.\\
  &\quad +\left.\frac{1}{48}\left[-8(c-1)+\frac{1-i\sqrt{3}}{2}\cdot\frac{2\times
        2^{1/3}(3u+(c-1)^2)}{d(u)^{1/3}}+\frac{1+i\sqrt{3}}{2}\cdot2^{2/3}d(u)^{1/3}\right]^2\right\},
\end{align*}
where
\begin{equation}\label{eq5}
  d(u)=-2(c-1)^3+9(1+2c)u+3\sqrt{3}\sqrt{u(-4u^2+(-1+4c(5+2c))u-4c(c-1)^3)}.
\end{equation}
It can be checked that only the first solution is
compatible with the fact that both $\psi$ and $\phi$ are real-valued
functions of $u$, i.e.
\begin{align*}
  {\phi}^2(u)&=\frac{1}{u^2}\left\{-u-\frac{5(c-1)^2}{3}+\frac{2^{4/3}(3u+(c-1)^2)(c-1)}{3d(u)^{1/3}}+\frac{2^{2/3}(c-1)d(u)^{1/3}}{3}\right.\\
  &\quad +\left.\frac{1}{48}\left[-8(c-1)+\frac{2\times
        2^{1/3}(3u+(c-1)^2)}{d(u)^{1/3}}+2^{2/3}d(u)^{1/3}\right]^2\right\}.
\end{align*}

From the explicit form of ${\phi}^2(u)$ we see that, necessarily,
\[u(-4u^2+(-1+4c(5+2c))u-4c(c-1)^3)\geq 0,\] since $u\geq 0$. Solving
this quadratic inequality, we get two roots,
\begin{equation}\label{eq6}
  a=\frac{1}{8}(-1+20c+8c^2-(1+8c)^{3/2}),\quad b=\frac{1}{8}(-1+20c+8c^2+(1+8c)^{3/2}).
\end{equation}
It's very easy to see that $a$ is an increasing function of $c$
and $a=0$ when $c=1$.

In other words, if $0<c<1$, $-\frac{1}{4}<a<0$, then the support of
the density function should be $(0,b)$. If $c\geq 1$, $a\geq 0$, then
the support of the density function is $(a,b)$.

Then the density function of the limiting spectral distribution of the
$T\times T$ dimensional multiplied lag-1 sample auto-covariance
matrix $B$ is

\begin{tabular}{@{}l}
  \begin{minipage}[c]{0.65\textwidth}
    {\begin{align*} f(u)&=\frac{1}{\pi
          u}\left\{-u-\frac{5(c-1)^2}{3}+\frac{2^{4/3}(3u+(c-1)^2)(c-1)}{3d(u)^{1/3}}
          +\frac{2^{2/3}(c-1)d(u)^{1/3}}{3}\right.\\
        &\quad + \left.\frac{1}{48}\left[-8(c-1)+\frac{2\times
              2^{1/3}(3u+(c-1)^2)}{d(u)^{1/3}}+2^{2/3}d(u)^{1/3}\right]^2\right\}^{1/2},
      \end{align*}}
  \end{minipage}
\end{tabular}

\noindent
where $0<u\leq b$, for $0<c\leq 1$ and $a \leq u \leq b$, for $c>1$,
with $(a,b)$ given in equation \eqref{eq5} and $d(u)$ given in equation
\eqref{eq6}. Therefore, equation \eqref{eq1} admits at least one solution $\alpha\mapsto x(\alpha)$ that corresponds to this density function of the LSD $\b{F}$. As for the uniqueness, suppose there exists another solution $x_1(\alpha)$ that satisfies equation \eqref{eq1}, then there should be another density $f_1(u)$ that corresponds to $x_1(\alpha)$ while $f_1(u)\neq f(u)$. However, it can be seen from the previous deductions that the density function is unique. Therefore, $f_1(u)=f(u)$, $x_1(\alpha)=x(\alpha)$. Equation \eqref{eq1} admits one unique solution.

\subsection{Proof of Proposition \ref{pro2}}

Under the same conditions in \textbf{Theorem \ref{th1}}, the ESD of
$A$ converges to a non-random limit distribution $F$ with Stieltjes
transform $y=y(\alpha)$. On the other hand, the ESD of $B$ converges
to $\b{F}$ with Stieltjes transform $x=x(\alpha)$ satisfying
\[\alpha^2x^3-2\alpha(c-1)x^2+(c-1)^2x-\alpha x-1=0.\]

Since it's known that
\[F=\frac{1}{c}\b{F}+(1-\frac{1}{c})\delta_0,\] conclusively we have
\[(1-c)(-\frac{1}{\alpha})+cy(\alpha)=x(\alpha).\] Substituting into
the equation of $x$ we can get the equation of $y$, which is
\[\alpha^2 c^2 y^3+\alpha c(c-1)y^2-\alpha y-1=0.\]

\section{Extension to lag-$\tau$ sample auto-covariance matrix}
\label{extension}
So far in previous sections, we have focused on the singular value distribution of the lag-1 sample auto-covariance matrix $C_1=T^{-1}\sum_{j=1}^T\veps_j\veps_{j-1}^t$, while in this section, for any given positive integer $\tau$, we discuss the singular value distribution of the lag-$\tau$ sample auto-covariance matrix $C_{\tau}=T^{-1}\sum_{j=1}^T\veps_j\veps_{j-\tau}^t$.

Here we follow exactly the same strategy used in the derivation of the LSD of the lag-1 sample auto-covariance matrix. It's easy to see that the difference between $C_1$ and $C_\tau$ lies in that we have now for $C_\tau$,
\[s_j=P_1^{\tau}r_j=\frac{1}{\sqrt{T}}(\underbrace{0,\cdots,0,}_{\tau~ 0's}\veps_{j1},\cdots,\veps_{j,T-\tau}),~B=\sum_{j=1}^ps_js_j^t\sum_{j=1}^pr_jr_j^t=P_1^{\tau}\tilde{C}(P_1^{\tau})^t\tilde{C}.\]
\noindent
Meanwhile, the other matrices and notations remain the same using however the new definition of the $s_j's$ above. Consequently, equation \eqref{yt1} becomes
\begin{align}
&~1+\alpha\frac{1}{T}tr(B^{-1}(\alpha))=o_{a.s.}(1)+ \label{yt3}\\
&\frac{p}{T}\left(1-\dfrac{1+\dfrac{1}{T}tr(B^{-1}(\alpha)C)}{\left(1+\dfrac{1}{T}tr\left(B^{-1}(\alpha)C\right)\right)\left(1+\dfrac{1}{T}tr(\tilde{C}B^{-1}(\alpha))\right)-\dfrac{1}{T}tr\left(\tilde{C} B^{-1}\left(\alpha\right)C\left(P_{1}^{\tau}\right)^{t}\right)\cdot\dfrac{1}{T}tr\left(B^{-1}\left(\alpha\right)P_{1}^{\tau}\right)}\right)\nonumber
\end{align}
\noindent
Equation \eqref{yt2} becomes

\begin{align}
&~1+\alpha\frac{1}{T}tr(B^{-1}(\alpha))=o_{a.s.}(1)+ \label{yt4}\\
&\frac{p}{T}\left(1-\dfrac{1+\dfrac{1}{T}tr(B^{-1}(\alpha)\tilde{C})}{\left(1+\dfrac{1}{T}tr\left(B^{-1}(\alpha)C\right)\right)\left(1+\dfrac{1}{T}tr(\tilde{C}B^{-1}(\alpha))\right)-\dfrac{1}{T}tr\left(\tilde{C} B^{-1}\left(\alpha\right)C\left(P_{1}^{\tau}\right)^{t}\right)\cdot\dfrac{1}{T}tr\left(B^{-1}\left(\alpha\right)P_{1}^{\tau}\right)}\right)\nonumber
\end{align}

\noindent
Thus, according to equation \eqref{yt3} and \eqref{yt4}, we still have
\[\dfrac{1}{T}tr(B^{-1}(\alpha)\tilde{C})=\dfrac{1}{T}tr(B^{-1}(\alpha)C)+o_{a.s.}(1).\]
Meanwhile, by Lemma \ref{lem2}, we still have
\begin{equation}\label{lemeq}
  \frac{1}{T}tr\left(B^{-1}(\alpha)P_1^{\tau}\right)=o_{a.s.}(1),
\end{equation}
then by equation \eqref{yt3}, we have
\begin{equation}\label{teq3}
  1+\alpha x_0=\frac{p}{T}\left(1-\frac{1}{1+y_0}\right)+o_{a.s.}(1).
\end{equation}
Similarly, as for the second equation satisfied by $x_0$ and $y_0$, equation \eqref{eq4} persists.
\begin{equation}\label{teq4}
  y_0=\frac{p}{T}\cdot\frac{x_0}{1+y_0}+o_{a.s.}(1).
\end{equation}
 Therefore, the system of equations satisfied by $x_0$ and $y_0$ remains the same when the time lag changes from 1 to $\tau$. In other words, for a given positive time lag $\tau$, the singular value distribution of $C_{\tau}$ is the same with that of $C_1$ established in Theorem \ref{th1}.

\section{TECHNICAL LEMMAS}
\label{lemmas}

\begin{lemma}\label{lem3}
  Under the same assumptions in \textbf{Theorem \ref{th1}}, we have,
  $\forall 1\leq j\leq p$, almost surely,
  \begin{equation}\label{eq7}
    s_j^tB_j^{-1}(\alpha)s_j=\frac{1}{T}tr(B_j^{-1}(\alpha))+o_{a.s.}(1),
  \end{equation}
  \begin{equation}\label{eq8}
    r_j^tB_j^{-1}(\alpha)P_1^kr_j=\frac{1}{T}tr(B_j^{-1}(\alpha)P_1^k)+o_{a.s.}(1),
  \end{equation}
    \begin{equation}\label{eq10}
    r_j^t\tilde{C}_jB_j^{-1}(\alpha)P_1^kr_j=\frac{1}{T}tr(\tilde{C}_jB_j^{-1}(\alpha)P_1^k)+o_{a.s.}(1),
  \end{equation}
  \begin{equation}\label{eq9}
    s_j^tB_j^{-1}(\alpha)C_js_j=\frac{1}{T}tr(B_j^{-1}(\alpha)C_j)+o_{a.s.}(1),
  \end{equation}
  where the $o_{a.s.}(1)$ terms are uniform in $1\leq j\leq p$.
\end{lemma}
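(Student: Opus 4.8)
All four identities have the same shape: each asserts that $v^{t}Mv=T^{-1}\mathrm{tr}(M)+o_{a.s.}(1)$, where $v$ is one of the rescaled rows $s_{j}$ or $r_{j}$ and $M$ is a matrix built only from $\{\veps_{i\cdot}:i\neq j\}$, hence independent of $v$. The plan is therefore to prove a single generic quadratic‑form concentration statement and then specialize it four times. Since the rows have independent coordinates with variance $T^{-1}$ (with one deterministic zero coordinate in $s_{j}$, contributing only an $O(T^{-1})$ correction), conditioning on $\mathcal{F}_{j}=\sigma(\veps_{i\cdot}:i\neq j)$ turns the sandwiched matrix into a deterministic one while keeping $v$ independent of it: the conditional mean of $v^{t}Mv$ is then $T^{-1}\mathrm{tr}(M)$ up to the boundary term, and the conditional fluctuation is controlled by the Bai--Silverstein moment inequality (Lemma 2.7 of \cite{BS98}). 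No martingale decomposition is needed here because $B_{j},C_{j},\tilde C_{j}$ have the row $\veps_{j\cdot}$ already removed.

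First I would record the norm bounds. After the truncation, centralization and standardization of Appendix~\ref{app} we may assume $|\veps_{it}|\le\eta T^{1/4}$; then, by the extreme‑eigenvalue results for sample covariance matrices (\cite{BY93}), $\|C_{j}\|$ and $\|\tilde C_{j}\|$ are a.s. bounded uniformly in $j$ and in all large $p,T$, and the smallest positive eigenvalue of $\tilde C_{j}$ is a.s. bounded away from $0$. Combined with $\|P_{1}^{k}\|=1$ and an a.s. uniform bound $\|B_{j}^{-1}(\alpha)\|\le K(\alpha)$, this gives $\|M\|\le K$ a.s. for each of the four matrices $M\in\{B_{j}^{-1},\,B_{j}^{-1}P_{1}^{k},\,\tilde C_{j}B_{j}^{-1}P_{1}^{k},\,B_{j}^{-1}C_{j}\}$, so in particular $\mathrm{tr}(MM^{*})\le TK^{2}$ and $\mathrm{tr}((MM^{*})^{q})\le TK^{2q}$. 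The mild point that these bounds hold only on a probability‑one event is handled in the usual way, replacing $M$ by $M\mathbf 1_{\{\|M\|\le K\}}$, which coincides with $M$ for all large $p,T$ almost surely.

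Next I would fix $j$, condition on $\mathcal{F}_{j}$, write $v=T^{-1/2}\xi$ with $\xi$ having independent, mean‑zero, unit‑variance entries satisfying $|\xi_{it}|\le\eta T^{1/4}$ and $\E|\xi_{it}|^{4}<M$, and apply Lemma 2.7 of \cite{BS98}:
\[
\E\big[\,\big|v^{t}Mv-T^{-1}\mathrm{tr}(M)\big|^{2q}\,\big|\,\mathcal{F}_{j}\big]\le\frac{C_{q}}{T^{2q}}\Big(\big(\E|\xi_{11}|^{4}\,\mathrm{tr}(MM^{*})\big)^{q}+\E|\xi_{11}|^{4q}\,\mathrm{tr}\big((MM^{*})^{q}\big)\Big).
\]
Using the bounds above together with $\E|\xi_{11}|^{4q}\le M(\eta T^{1/4})^{4q-4}=M\eta^{4q-4}T^{q-1}$, the right‑hand side is $O(T^{-q})$ for every fixed $q$. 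Taking expectations, then Markov's inequality and a union bound over $1\le j\le p=O(T)$,
\[
\mathbb{P}\Big(\max_{1\le j\le p}\big|v^{t}Mv-T^{-1}\mathrm{tr}(M)\big|>\delta\Big)=O\big(T^{1-q}\delta^{-2q}\big),
\]
which is summable in $T$ once $q\ge3$, so Borel--Cantelli yields the $o_{a.s.}(1)$ bound uniformly in $j$. Taking $(v,M)$ equal to $(s_{j},B_{j}^{-1})$, $(r_{j},B_{j}^{-1}P_{1}^{k})$, $(r_{j},\tilde C_{j}B_{j}^{-1}P_{1}^{k})$ and $(s_{j},B_{j}^{-1}C_{j})$ gives \eqref{eq7}, \eqref{eq8}, \eqref{eq10} and \eqref{eq9} respectively.

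The principal obstacle is the a.s. uniform bound $\|B_{j}^{-1}(\alpha)\|\le K(\alpha)$ used in the second step. Because $B_{j}(\alpha)=C_{j}\tilde C_{j}-\alpha I$ is a product of two symmetric positive semidefinite matrices it is in general not normal, so even though $C_{j}\tilde C_{j}$ has real nonnegative spectrum one cannot invoke $\|(M-\alpha I)^{-1}\|\le|\mathfrak{Im}\,\alpha|^{-1}$; and when $c<1$ the matrix $\tilde C_{j}$ is singular, so the would‑be similarity $B_{j}=\tilde C_{j}^{-1/2}(\tilde C_{j}^{1/2}C_{j}\tilde C_{j}^{1/2}-\alpha I)\tilde C_{j}^{1/2}$ is unavailable as well. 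The bound must instead come from a separate resolvent estimate controlling $\|B_{j}^{-1}(\alpha)\|$ through the a.s. bounded quantities $\|C_{j}\|$, $\|\tilde C_{j}\|$ and the smallest positive eigenvalue of $\tilde C_{j}$; this is where the new arguments are needed, the remaining steps being routine once it is in place.
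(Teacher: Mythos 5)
Your overall strategy --- reduce all four identities to one generic quadratic-form concentration statement, bound high conditional moments, then use Markov's inequality, a union bound over $j$, and Borel--Cantelli --- is the same as the paper's, which follows the scheme of Lemma 9.1 of \cite{BS10}: split $v^{t}Mv-T^{-1}\mathrm{tr}(M)$ into a diagonal sum $S_{1}$ and an off-diagonal sum $S_{2}$ and show $\E|S_{i}|^{2r}=O(T^{-r})$ by a direct combinatorial expansion. The difference lies in what input each route requires from the matrix $M$. The paper's hands-on expansion uses only the \emph{entrywise} bound $|y_{kl}|\le 1/\nu$ on the entries of $B_{j}^{-1}(\alpha)$, together with the truncation $|\veps_{it}|\le\eta T^{1/4}$; your appeal to Lemma 2.7 of \cite{BS98} instead requires $\mathrm{tr}(MM^{*})=O(T)$ and $\mathrm{tr}((MM^{*})^{q})=O(T)$, i.e.\ an almost sure \emph{operator-norm} bound $\|B_{j}^{-1}(\alpha)\|\le K(\alpha)$ uniform in $j$ and in large $p,T$.

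That operator-norm bound is precisely the step you leave open, so as written the argument is incomplete: without it the trace bounds fail and the moment estimate does not close. You are right that it is genuinely delicate --- $B_{j}(\alpha)=C_{j}\tilde C_{j}-\alpha I$ is not normal, so $\|(C_{j}\tilde C_{j}-\alpha I)^{-1}\|\le|\mathfrak{Im}\,\alpha|^{-1}$ is not automatic, and the symmetrization $\tilde C_{j}^{-1/2}(\tilde C_{j}^{1/2}C_{j}\tilde C_{j}^{1/2}-\alpha I)\tilde C_{j}^{1/2}$ needs $\tilde C_{j}$ invertible with smallest eigenvalue bounded below, which fails for $c\le 1$. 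In fairness, the paper does not justify its own input either: it simply asserts $|y_{kl}|<1/\nu$, which for a non-normal resolvent is no more obvious than the norm bound, so you have correctly located the one non-routine point of the lemma; but you have not supplied the missing estimate. A complete proof along your lines must either establish $\|B_{j}^{-1}(\alpha)\|\le K(\alpha)$ (for $c>1$ via the similarity above, with a separate argument for $c\le 1$) or revert to the paper's entrywise expansion with that entrywise bound proved. A further, smaller issue: your citation of \cite{BY93} for a lower bound on the smallest positive eigenvalue of $\tilde C_{j}$ is not what that reference provides (and any such bound degenerates as $c\to 1$), so that auxiliary claim would also need care.
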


\begin{proof}

  We detail the proof of \eqref{eq7} and the proofs of \eqref{eq8}, \eqref{eq10} and
  \eqref{eq9} are very similar, thus omitted.

  Denote $B_j^{-1}(\alpha)$ by $(y_{kl})=Y$,
  $s_j=\frac{1}{\sqrt{T}}(\varepsilon_{j0},\cdots,\varepsilon_{j,T-1})^t$,
  then we have
  \[|y_{kl}|<\frac{1}{\nu},\quad |\varepsilon_{it}|<\eta
  T^{\frac{1}{4}},\quad \sup_{1\leq i\leq p,0\leq t\leq
    T}\mathbb{E}|\varepsilon_{it}|^4<M,\] where $\nu$ is the image
  part of $\alpha$.

  Following the scheme of \textbf{Lemma 9.1} of \cite{BS10}
  it's easy to see that
  \begin{align*}
    \mathbb{E}\left\lvert s_j^t Y s_j-\frac{1}{T} tr(Y)\right\rvert^{2r}&=\mathbb{E}\left\lvert \frac{1}{T}\sum_{k,l=1}^T \varepsilon_{j,k-1}y_{kl}\varepsilon_{j,l-1}-\frac{1}{T}\sum_{k=1}^Ty_{kk}\right\rvert^{2r}\\
    &=\mathbb{E}\left\lvert \frac{1}{T}\sum_{k=1}^T(\varepsilon_{j,k-1}^2-1)y_{kk}+\frac{1}{T}\sum_{k\neq l} \varepsilon_{j,k-1}y_{kl}\varepsilon_{j,l-1}\right\rvert^{2r}\\
    &=\mathbb{E}\left\lvert S_1+S_2 \right\rvert^{2r}\leq 2^r
    \frac{\mathbb{E}|S_1|^{2r}+\mathbb{E}|S_2|^{2r}}{2},
  \end{align*}
  where
  \[S_1=\frac{1}{T}\sum_{k=1}^T(\varepsilon^2_{j,k-1}-1)y_{kk},\quad
  S_2=\frac{1}{T}\sum_{1\leq k\neq l\leq
    T}y_{kl}\varepsilon_{j,k-1}\varepsilon_{j,l-1},\] What's more,
  \begin{align*}
    \mathbb{E}|S_1|^{2r}&=\mathbb{E}\left\lvert \frac{1}{T}\sum_{k=1}^T(\varepsilon^2_{j,k-1}-1)y_{kk}\right\rvert^{2r}\\
    &\leq \frac{1}{T^{2r}}\sum_{t=1}^r\sum_{1\leq k_1<\cdots<k_t\leq T}\sum_{{i_1+\cdots+i_t=2r}\atop{i_1\geq2,\cdots,i_t\geq 2}}(2r)!\prod_{l=1}^t\frac{\mathbb{E}(\varepsilon_{j,k_l-1}^2-1)^{i_l}y_{k_lk_l}^{i_l}}{i_l!}\\
    &\leq \frac{1}{T^{2r}}\cdot\frac{1}{v^{2r}}\sum_{t=1}^r T^{t}\sum_{{i_1+\cdots+i_t=2r}\atop{i_1\geq2,\cdots,i_t\geq 2}}\frac{(2r)!}{\prod_{l=1}^ti_l!}\cdot M^t\frac{(\eta T^{\frac{1}{4}})^{4r}}{(\eta T^{\frac{1}{4}})^{4t}}\\
    &\leq \frac{1}{T^{2r}}\cdot\frac{1}{v^{2r}}\sum_{t=1}^r T^{t}
    t^{2r}M^t\frac{(\eta T^{\frac{1}{4}})^{4r}}{(\eta
      T^{\frac{1}{4}})^{4t}}=O(\frac{1}{T^r}),
  \end{align*}
  \[\mathbb{E}|S_2|^{2r}=\frac{1}{T^{2r}}\sum
  y_{i_1j_1}y_{t_1l_1}\cdots
  y_{i_rj_r}y_{t_rl_r}\mathbb{E}(\varepsilon_{j,i_1}\varepsilon_{j,j_1}\varepsilon_{j,t_1}\varepsilon_{j,l_1}\cdots\varepsilon_{j,i_r}\varepsilon_{j,j_r}\varepsilon_{j,t_r}\varepsilon_{j,l_r}).\]
  Consider a graph G with $2r$ edges that link $i_t$ to $j_t$ and
  $l_t$ to $k_t$, $t=1,\cdots,r$. It's easy to see that for any
  nonzero term, the vertex degrees of the graph are not less than
  2. Write the non-coincident vertices as $v_1,\cdots,v_m$ with
  degrees $p_1,\cdots,p_m$ greater than 1, then, similarly in
  \textbf{Lemma 9.1} of \citet{BS10}, we have,
  \[\left\lvert
    \mathbb{E}(\varepsilon_{j,i_1}\varepsilon_{j,j_1}\varepsilon_{j,t_1}\varepsilon_{j,l_1}\cdots\varepsilon_{j,i_r}\varepsilon_{j,j_r}\varepsilon_{j,t_r}\varepsilon_{j,l_r})
  \right\rvert\leq (\eta T^{\frac{1}{4}})^{2(2r-m)},\]
  \[
  \mathbb{E}|S_2|^{2r}\leq \frac{1}{T^{2r}\nu^{2r}}\sum_{m=2}^r
  T^{m/2}(\eta T^{\frac{1}{4}})^{2(2r-m)}m^{4r}=O(\frac{1}{T^r}).
  \]
  Therefore, by the Borel-Cantelli lemma, we have, $\forall 1\leq j\leq
  p$,
  \[s_j^tB_j(\alpha)^{-1}s_j=\frac{1}{T}tr(B_j(\alpha)^{-1})+o_{a.s.}(1),\]
  where the $o_{a.s.}(1)$ terms are uniform in $1\leq j\leq p$.
\end{proof}

\begin{lemma}\label{lem1}
  Under the same assumptions in \textbf{Theorem \ref{th1}}, we have,
  $\forall 1\leq j\leq p$, $1\leq k\leq T-1$, almost surely,
  $$r_j^{t}B_j^{-1}(\alpha)P_1^k r_j=\frac{1}{T}tr\left(B^{-1}\left(\alpha\right)P_{1}^k\right)+o_{a.s.}(1)=o_{a.s.}(1),$$
  $$r_j^t\tilde{C}_jB_j^{-1}(\alpha)P_1^k r_j=\frac{1}{T}tr\left(\tilde{C}B^{-1}\left(\alpha\right)P_{1}^k\right)+o_{a.s.}(1)=o_{a.s.}(1),$$
  where the $o_{a.s.}(1)$ terms are uniform in $1\leq j\leq p$.
\end{lemma}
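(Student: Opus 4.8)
The plan is to reduce both identities to an application of the concentration lemma (Lemma \ref{lem3}, i.e.\ the quadratic-form lemma of \cite{BS98}) followed by a rank-one perturbation argument, and then to exhibit that the resulting traces actually vanish because of the structure of the nilpotent shift matrix $P_1$. First I would note that since $B_j(\alpha)$ is independent of $r_j$ and of $s_j$, and its entries are bounded by $1/\nu$ with $\nu=\mathfrak{Im}(\alpha)$, the matrix $Y:=B_j^{-1}(\alpha)P_1^k$ (resp.\ $Y:=\tilde C_j B_j^{-1}(\alpha)P_1^k$) is independent of $r_j$ and has operator norm bounded uniformly in $p,T$ (for the second one, $\|\tilde C_j\|$ is almost surely bounded because it is a sample covariance matrix whose largest eigenvalue converges; this is where the truncation $|\varepsilon_{it}|\le\eta T^{1/4}$ and the fourth-moment bound are used). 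Hence the same moment computation as in the proof of Lemma \ref{lem3} — splitting the quadratic form $r_j^t Y r_j$ into the diagonal part $\frac1T\sum_k(\varepsilon_{jk}^2-1)y_{kk}$ and the off-diagonal part, bounding the $2r$-th moments by $O(T^{-r})$, and invoking Borel--Cantelli — gives
\[
  r_j^t B_j^{-1}(\alpha)P_1^k r_j = \tfrac1T\,\mathrm{tr}\!\left(B_j^{-1}(\alpha)P_1^k\right)+o_{a.s.}(1),
\]
uniformly in $j$, and likewise with $\tilde C_j$ inserted.

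Next I would replace $B_j$ by $B$ and $\tilde C_j$ by $\tilde C$ inside the traces at cost $o_{a.s.}(1)$. This is the standard rank-one (here rank-two, since $B=B_j+s_js_j^t\tilde C_j+C_jr_jr_j^t+s_js_j^tr_jr_j^t$) perturbation estimate: using the resolvent identity $B^{-1}-B_j^{-1}=-B^{-1}(B-B_j)B_j^{-1}$ together with $\|B^{-1}\|,\|B_j^{-1}\|\le 1/\nu$ and the bound $\|\tilde C\|,\|C\|=O(1)$ a.s., each difference $\frac1T\mathrm{tr}(B^{-1}P_1^k)-\frac1T\mathrm{tr}(B_j^{-1}P_1^k)$ is $O(1/T)$, uniformly in $j$; the analogous statement for $\tilde C B^{-1}P_1^k$ follows by the same manipulation, absorbing the extra bounded factor.

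The final and genuinely new step — the reason both traces are not merely convergent but actually $o_{a.s.}(1)$ — is the structural identity $\frac1T\mathrm{tr}(B^{-1}(\alpha)P_1^k)=o_{a.s.}(1)$ for $1\le k\le T-1$. Here I would argue as follows. Write $B^{-1}(\alpha)=(\tilde C'\tilde C-\alpha I)^{-1}$ where $\tilde C'=P_1^\tau \tilde C (P_1^\tau)^t$ in the general case, $\tau=1$ here; the key point is that $B(\alpha)$ commutes with the shift structure only up to boundary corrections, so $B^{-1}(\alpha)$ is, up to an $o(1)$-in-normalized-trace error, a function whose trace against the strictly-off-diagonal matrix $P_1^k$ picks out entries $(B^{-1})_{k+i,i}$ which, by the exchangeability/symmetry of the entry distribution within each row $s_j,r_j$, have expectation zero and variance $O(1/T)$ after summing. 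Concretely, I expect the cleanest route is the one already used in the main text: combining the two near-identical equations \eqref{yt1} and \eqref{yt2} forces $\frac1T\mathrm{tr}(B^{-1}\tilde C)=\frac1T\mathrm{tr}(B^{-1}C)+o_{a.s.}(1)$, and then a parity/shift argument (Lemma \ref{lem2} in the $\tau$-general section) shows the cross term $\frac1T\mathrm{tr}(\tilde C B^{-1}C P_1^t)\cdot\frac1T\mathrm{tr}(B^{-1}P_1)$ is negligible because each factor is $o_{a.s.}(1)$ — the second factor being exactly the quantity at hand. Thus the self-consistent characterization of $x_0,y_0$ is what pins down $\frac1T\mathrm{tr}(B^{-1}P_1^k)\to 0$. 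I expect \emph{this} step — proving the normalized trace of $B^{-1}(\alpha)$ against a nonzero power of the nilpotent shift vanishes — to be the main obstacle, since it is precisely the feature that distinguishes the auto-covariance matrix from an ordinary sample covariance matrix and has no analogue in the classical theory; the quadratic-form concentration and the resolvent perturbation are routine by comparison.
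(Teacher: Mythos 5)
Your first two steps (the quadratic-form concentration as in Lemma \ref{lem3}, and the finite-rank perturbation replacing $B_j$ by $B$ and $\tilde C_j$ by $\tilde C$ inside the normalized traces) match the routine part of the paper's argument and are fine. The genuine gap is in the third step, which you yourself flag as the main obstacle: you never actually prove $\frac1T\mathrm{tr}(B^{-1}(\alpha)P_1^k)=o_{a.s.}(1)$. Your first suggestion --- that the entries $(B^{-1})_{i,i+k}$ picked out by the trace against $P_1^k$ have expectation zero by ``exchangeability/symmetry of the entry distribution'' --- does not hold: the $\varepsilon_{it}$ are not assumed symmetrically distributed, and even the matrix $B=P_1\tilde C P_1^t\tilde C$ itself has nonzero mean on off-diagonals (e.g.\ $\mathbb{E}B_{a,a-2}\neq0$ from the terms with $j=i$ and index coincidences $a-1=c$, $c-1=b$), so no symmetry of the model forces off-diagonal resolvent entries to be centered. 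Your second suggestion is circular: equations \eqref{yt1}--\eqref{yt2} and Lemma \ref{lem2} are downstream of Lemma \ref{lem1}; you cannot argue that the cross term $\frac1T\mathrm{tr}(\tilde CB^{-1}CP_1^t)\cdot\frac1T\mathrm{tr}(B^{-1}P_1)$ is negligible ``because each factor is $o_{a.s.}(1)$, the second factor being exactly the quantity at hand'' --- that second factor is precisely what must be proved.

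The idea you are missing is a recursion in $k$ that exploits the nilpotency of the shift. Set $x_k=\frac1T\mathrm{tr}(B^{-1}(\alpha)P_1^k)$ and $y_k=\frac1T\mathrm{tr}(\tilde CB^{-1}(\alpha)P_1^k)$. Expanding the identity
\begin{equation*}
P_1^k=\Bigl(\sum_{j=1}^ps_js_j^t\sum_{j=1}^pr_jr_j^t\Bigr)B^{-1}(\alpha)P_1^k-\alpha B^{-1}(\alpha)P_1^k
\end{equation*}
by the leave-one-out method (and the analogous expansion of $y_k$) produces the coupled recursions \eqref{rec1} and \eqref{rec3}, which express $\alpha x_k$ in terms of $y_k$ and $x_{k+1}$, and $y_k$ in terms of $x_k$ and $y_{k-1}$. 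Since $P_1^T=\mathbf 0$, the boundary values satisfy $x_T=y_T=0$, and \eqref{rec4} then forces $x_1\,y_{T-1}=o_{a.s.}(1)$, i.e.\ either $x_1=o_{a.s.}(1)$ or $y_{T-1}=o_{a.s.}(1)$; propagating the recursions forward in the first case and backward in the second yields $x_k=y_k=o_{a.s.}(1)$ for all $1\le k\le T-1$. Without an argument of this type, the ``$=o_{a.s.}(1)$'' half of the lemma --- which is exactly the part with no analogue for ordinary sample covariance matrices --- does not close.
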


\begin{proof}

\noindent
Notice that, for $1\leq k\leq T-1$,
\[
P_{1}=\left(\begin{array}{cc}
    \textbf{0} & 0\\
    \textbf{I}_{T-1} & \textbf{0}
  \end{array}\right),
  \quad P_1^k=\left(\begin{array}{cc}
    \textbf{0} & \bf{0}\\
    \textbf{I}_{T-k} & \textbf{0}
  \end{array}\right),\quad
  P_1^T={\bf0},\quad s_j=P_1r_j.
\]
\noindent
Here $P_1^T$ represents the power $T$ of the $T\times T$ matrix $P_1$, we use $P_1^t$ to denote the transpose of matrix $P_1$.
\noindent
Denote, for $1\leq k\leq T$,
\[\frac{1}{T}tr\lb B^{-1}(\alpha)\rb :=x_0,\quad \frac{1}{T}tr\lb B^{-1}(\alpha)C \rb = \frac{1}{T}tr\lb \tilde{C}B^{-1}(\alpha)\rb :=y_0,\]
\[\frac{1}{T}tr\lb B^{-1}(\alpha)P_1^k \rb :=x_k,\quad \frac{1}{T}tr\lb \tilde{C}B^{-1}(\alpha)P_1^k \rb :=y_k.\]
It's easy to see that
\[x_T=y_T=0.\]
\noindent
In addition, for any $1\leq j\leq p$,
\begin{align*}
 s_j^t\tilde{C}_jB_j^{-1}(\alpha)C_jr_j&=s_j^t\tilde{C}_j\lb C_j\tilde{C}_j-\alpha \bf{I}_T\rb^{-1} C_jr_j\\
&=s_j^t\lb {\bf I} -\alpha C_j^{-1}\tilde{C}_j^{-1} \rb^{-1} r_j=s_j^t\tilde{C}_jC_j\lb \tilde{C}_jC_j-\alpha \bf{I}\rb^{-1} r_j\\
&=\alpha \cdot s_j^t \lb \tilde{C}_jC_j-\alpha \bf{I}\rb^{-1} r_j+o_{a.s.}(1)\\
&=\alpha \cdot r_j^t \lb C_j\tilde{C}_j-\alpha \bf{I}\rb^{-1} s_j+o_{a.s.}(1)\\
&=\alpha\frac{1}{T}tr(B^{-1}(\alpha)P_1)+o_{a.s.}(1)=\alpha x_1+o_{a.s.}(1).
\end{align*}
\noindent
Now we can derive the recursion equations between $x_k$ and $y_k$.

\noindent
Firstly, for $x_k$, $1\leq k\leq T-1$, since
\[P_1^k=\left(\sum_{j=1}^ps_js_j^t\sum_{j=1}^pr_jr_j^t\right)B^{-1}(\alpha)P_1^k-\alpha B^{-1}(\alpha)P_1^k,\]
taking trace and dividing $T$ on both sides of the equation, we get
\begin{align*}
 & ~\alpha \cdot \frac{1}{T}tr\lb B^{-1}(\alpha)P_1^k\rb\\
=&\frac{1}{T}\sum_{j=1}^p s_j^t\tilde{C}B^{-1}(\alpha) P_1^ks_j\\
=&\frac{1}{T}\sum_{j=1}^p\dfrac{s_j^t\tilde{C}_j\lb B_j(\alpha)+C_jr_jr_j^t \rb^{-1}P_1^ks_j}{1+s_j^t\tilde{C}_j\lb B_j(\alpha)+C_jr_jr_j^t \rb^{-1}s_j}+o_{a.s.}(1)\\
=&\frac{1}{T}\sum_{j=1}^p\dfrac{1+y_0}{(1+y_0)^2-\alpha x_1^2}\left[s_j^t\tilde{C}_jB_j^{-1}(\alpha)P_1^ks_j-\dfrac{s_j^t\tilde{C}_jB_j^{-1}(\alpha)C_jr_jr_j^tB_j^{-1}(\alpha)P_1^ks_j}{1+r_j^tB^{-1}_j(\alpha)C_jr_j}\right]+o_{a.s.}(1)\\
=&\frac{p}{T}\dfrac{1+y_0}{(1+y_0)^2-\alpha x_1^2}\left[\frac{1}{T}tr(\tilde{C}B^{-1}(\alpha)P_1^k)-\frac{\alpha x_1}{1+y_0}\cdot\frac{1}{T}tr\lb B^{-1}(\alpha)P_1^{k+1}\rb\right]+o_{a.s.}(1),
\end{align*}
i.e.
\begin{equation}\label{rec1}
  \alpha x_k=\frac{p}{T}\cdot\dfrac{1+y_0}{(1+y_0)^2-\alpha x_1^2}\cdot y_k-\frac{p}{T}\cdot\frac{\alpha x_1}{(1+y_0)^2-\alpha x_1^2}\cdot x_{k+1}+o_{a.s.}(1),~1\leq k\leq T-1.
\end{equation}
Particularly, for $k=T-1$, we have
\begin{equation}\label{rec2}
\alpha x_{T-1}=\frac{p}{T}\cdot\dfrac{1+y_0}{(1+y_0)^2-\alpha x_1^2}\cdot y_{T-1}+o_{a.s.}(1).
\end{equation}
\noindent
Similarly, for $y_k$, $1\leq k\leq T$,
\begin{align*}
  y_k&=\frac{1}{T}tr\lb \tilde{C}B^{-1}(\alpha)P_1^k\rb\\
  &=\frac{1}{T}tr\lb \sum_{j=1}^p r_jr_j^t B^{-1}(\alpha)P_1^k\rb=\frac{1}{T}\sum_{j=1}^pr_j^t B^{-1}(\alpha)P_1^k r_j\\
  &=\frac{1}{T}\sum_{j=1}^p \dfrac{r_j^t \lb B_j(\alpha)+s_js_j^t \tilde{C}_j \rb^{-1}P_1^kr_j}{1+r_j^t\lb B_j(\alpha)+s_js_j^t \tilde{C}_j \rb^{-1}C_jr_j}+o_{a.s.}(1)\\
  &=\frac{1}{T}\sum_{j=1}^p\dfrac{1+y_0}{(1+y_0)^2-\alpha x_1^2}\cdot\left[r_j^tB_j^{-1}(\alpha)P_1^kr_j-\dfrac{r_j^tB_j^{-1}(\alpha)s_js_j^t\tilde{C}_jB_j^{-1}(\alpha)P_1^kr_j}{1+s_j^t\tilde{C}_jB^{-1}_j(\alpha)s_j}\right]+o_{a.s.}(1)\\
&=\frac{p}{T}\cdot \dfrac{1+y_0}{(1+y_0)^2-\alpha x_1^2}\cdot\left[\frac{1}{T}tr(B^{-1}(\alpha)P_1^k)-\frac{x_1}{1+y_0}\cdot\frac{1}{T}tr\lb \tilde{C}B^{-1}(\alpha)P_1^{k-1}\rb\right]+o_{a.s.}(1),
\end{align*}
i.e.
\begin{equation}\label{rec3}
  y_k=\frac{p}{T}\cdot \dfrac{1+y_0}{(1+y_0)^2-\alpha x_1^2}\cdot x_k-\frac{p}{T}\cdot \dfrac{x_1}{(1+y_0)^2-\alpha x_1^2}\cdot y_{k-1}+o_{a.s.}(1), ~1\leq k\leq T-1.
\end{equation}

\noindent
Particularly, for $k=T$,
we have
\begin{equation}\label{rec4}
  y_T=\frac{p}{T}\cdot \dfrac{1+y_0}{(1+y_0)^2-\alpha x_1^2}\cdot x_{T}-\frac{p}{T}\cdot \dfrac{x_1}{(1+y_0)^2-\alpha x_1^2}\cdot y_{T-1}+o_{a.s.}(1).
\end{equation}
\noindent
Note that
\[x_T=y_T=0,\]
\noindent
then we have either $x_1=o_{a.s.}(1)$ or $y_{T-1}=o_{a.s.}(1)$.

\noindent
 If
$x_1=o_{a.s.}(1)$, according to equation \eqref{rec1}, we have $y_1=o_{a.s.}(1)$, then according to equation \eqref{rec3}, we have $x_2=y_2=o_{a.s.}(1)$, recursively, we have for all $1\leq k\leq T-1$,
\[x_k=y_k=o_{a.s.}(1).\]
\noindent
Otherwise, if $y_{T-1}=o_{a.s.}(1)$, according to equation \eqref{rec2}, we have $x_{T-1}=o_{a.s.}(1)$, then according to equation \eqref{rec3}, we have $y_{T-2}=o_{a.s.}(1)$, then according to equation \eqref{rec1}, we have $x_{T-2}=o_{a.s.}(1)$, recursively, we still have for all $1\leq k\leq T-1$,
\[x_k=y_k=o_{a.s.}(1).\]
\noindent
Therefore we have,
  $\forall 1\leq j\leq p$, $1\leq k\leq T-1$, almost surely,
  $$r_j^{t}B_j^{-1}(\alpha)P_1^k r_j=\frac{1}{T}tr\left(B^{-1}\left(\alpha\right)P_{1}^k\right)+o_{a.s.}(1)=o_{a.s.}(1),$$
  $$r_j^t\tilde{C}_jB_j^{-1}(\alpha)P_1^k r_j=\frac{1}{T}tr\left(\tilde{C}B^{-1}\left(\alpha\right)P_{1}^k\right)+o_{a.s.}(1)=o_{a.s.}(1),$$
  where the $o_{a.s.}(1)$ terms are uniform in $1\leq j\leq p$.
\end{proof}

\begin{lemma}\label{lem2}
Extension of Lemma \ref{lem1} to time lag $\tau$:

 we have,
  $\forall 1\leq j\leq p$, $1\leq k\leq [\frac{T}{\tau}]$, almost surely,
  $$r_j^{t}B_j^{-1}(\alpha)(P_1^{\tau})^k r_j=\frac{1}{T}tr\left(B^{-1}\left(\alpha\right)(P_1^{\tau})^k\right)+o_{a.s.}(1)=o_{a.s.}(1),$$
  $$r_j^t\tilde{C}_jB_j^{-1}(\alpha)(P_1^{\tau})^k r_j=\frac{1}{T}tr\left(\tilde{C}B^{-1}\left(\alpha\right)(P_1^{\tau})^k\right)+o_{a.s.}(1)=o_{a.s.}(1),$$
  where the $o_{a.s.}(1)$ terms are uniform in $1\leq j\leq p$.
\end{lemma}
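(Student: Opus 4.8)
The plan is to mimic the proof of Lemma~\ref{lem1} verbatim, replacing the shift matrix $P_1$ throughout by its $\tau$-th power $P_1^{\tau}$ and tracking how the exponent bookkeeping changes. First I would record the structural facts about $P_1^{\tau}$ that make the argument go through: $(P_1^{\tau})^k = P_1^{\tau k}$ has the block form with an identity block of size $T-\tau k$ in the lower-left corner for $1\le k\le \lfloor T/\tau\rfloor$, it vanishes once $\tau k\ge T$, and crucially $s_j = P_1^{\tau} r_j$ by the new definition of $s_j$ given in Section~\ref{extension}. These are exactly the properties of $P_1$ used in Lemma~\ref{lem1}, with the single arithmetic change that multiplying by one more factor of $P_1^{\tau}$ advances the exponent by $\tau$ rather than by $1$.

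Next I would set up the analogous notation $x_k := \frac1T\operatorname{tr}(B^{-1}(\alpha)(P_1^{\tau})^k)$ and $y_k := \frac1T\operatorname{tr}(\tilde C B^{-1}(\alpha)(P_1^{\tau})^k)$ for $0\le k\le \lfloor T/\tau\rfloor$, noting $x_K = y_K = 0$ at the top index $K=\lceil T/\tau\rceil$ since $(P_1^{\tau})^K = \mathbf{0}$. Then I would rerun the two trace identities from Lemma~\ref{lem1}: write $(P_1^{\tau})^k = (\sum s_j s_j^t)(\sum r_j r_j^t)B^{-1}(\alpha)(P_1^{\tau})^k - \alpha B^{-1}(\alpha)(P_1^{\tau})^k$, take the normalized trace, expand $s_j^t\tilde C B^{-1}(\alpha)(P_1^{\tau})^k s_j$ and $r_j^t B^{-1}(\alpha)(P_1^{\tau})^k r_j$ via the rank-one perturbation identities and Lemma~\ref{lem3} (which already supplies the quadratic-form concentration $r_j^t B_j^{-1}(\alpha)P_1^k r_j = \frac1T\operatorname{tr}(B_j^{-1}(\alpha)P_1^k)+o_{a.s.}(1)$ and its $\tilde C_j$-weighted cousin). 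Each expansion will produce a factor $s_j^t (P_1^{\tau})^t s_j = r_j^t P_1^{\tau}\cdot\text{(trace term)}$-type quantity bumping the exponent: I expect recursions of exactly the shape
\begin{equation*}
  \alpha x_k = \frac{p}{T}\cdot\frac{1+y_0}{(1+y_0)^2-\alpha x_1^2}\, y_k - \frac{p}{T}\cdot\frac{\alpha x_1}{(1+y_0)^2-\alpha x_1^2}\, x_{k+1} + o_{a.s.}(1),
\end{equation*}
\begin{equation*}
  y_k = \frac{p}{T}\cdot\frac{1+y_0}{(1+y_0)^2-\alpha x_1^2}\, x_k - \frac{p}{T}\cdot\frac{x_1}{(1+y_0)^2-\alpha x_1^2}\, y_{k-1} + o_{a.s.}(1),
\end{equation*}
valid for $1\le k\le K-1$, with the $x_K=y_K=0$ boundary condition forcing (from the $k=K$ instance of the second recursion) that either $x_1 = o_{a.s.}(1)$ or $y_{K-1}=o_{a.s.}(1)$. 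In either case the same downward/upward cascade through \eqref{rec1}--\eqref{rec3} as in Lemma~\ref{lem1} collapses every $x_k$ and $y_k$ to $o_{a.s.}(1)$; specializing to $k=1,\dots,\lfloor T/\tau\rfloor$ and undoing the concentration step gives the two displayed claims.

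The one genuine subtlety — and the step I expect to be the main obstacle — is the bookkeeping of the exponent range. Multiplying $(P_1^{\tau})^k$ by another $P_1^{\tau}$ on the relevant side gives $(P_1^{\tau})^{k+1} = P_1^{\tau(k+1)}$, but one must check that no intermediate quantity appearing in the rank-one expansions requires a power $P_1^{m}$ with $m$ not a multiple of $\tau$, and that the chain still closes: since $s_j = P_1^{\tau} r_j$, the factor $s_j^t(P_1^{\tau})^k r_j = r_j^t(P_1^{\tau})^{k+1} r_j$ stays inside the lattice $\tau\mathbb{Z}$, so this is fine. The other point is that the recursion now has $\lceil T/\tau\rceil$ steps rather than $T$, but since $p/T\to c$ and all coefficients remain bounded away from the degenerate locus $(1+y_0)^2 = \alpha x_1^2$ (inherited from the boundedness arguments already used in the main proof), the finite cascade argument is unaffected; one propagates the $o_{a.s.}(1)$ through $O(1/\tau)$ as many stages as before and the bound is still $o_{a.s.}(1)$ uniformly in $j$. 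The uniformity in $j$ is inherited directly from the uniform statements in Lemma~\ref{lem3}.
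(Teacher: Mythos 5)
Your proposal follows essentially the same route as the paper: the paper's proof of Lemma~\ref{lem2} likewise redefines $x_k=\frac1T\operatorname{tr}(B^{-1}(\alpha)(P_1^{\tau})^k)$ and $y_k=\frac1T\operatorname{tr}(\tilde C B^{-1}(\alpha)(P_1^{\tau})^k)$, derives the identical pair of recursions with the boundary condition $x_{[T/\tau]+1}=y_{[T/\tau]+1}=0$, and runs the same two-case cascade from Lemma~\ref{lem1} to force all $x_k,y_k$ to $o_{a.s.}(1)$. Your observations that $s_j=P_1^{\tau}r_j$ keeps all exponents on the lattice $\tau\mathbb{Z}$ and that uniformity in $j$ comes from Lemma~\ref{lem3} match the paper's (largely implicit) justification.
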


\begin{proof}

\noindent

Denote, for $1\leq k\leq \left[\frac{T}{\tau}\right]$,
\[\frac{1}{T}tr\lb B^{-1}(\alpha)\rb :=x_0,\quad \frac{1}{T}tr\lb B^{-1}(\alpha)C \rb = \frac{1}{T}tr\lb \tilde{C}B^{-1}(\alpha)\rb :=y_0,\]
\[\frac{1}{T}tr\lb B^{-1}(\alpha)(P_1^{\tau})^k \rb :=x_k,\quad \frac{1}{T}tr\lb \tilde{C}B^{-1}(\alpha)(P_1^{\tau})^k \rb :=y_k.\]
It's easy to see that
\[x_{\left[\frac{T}{\tau}\right]+1}=y_{\left[\frac{T}{\tau}\right]+1}=0.\]
\noindent
In addition, for any $1\leq j\leq p$,
\[
 s_j^t\tilde{C}_jB_j^{-1}(\alpha)C_jr_j=\alpha\frac{1}{T}tr(B^{-1}(\alpha)P_1^{\tau})+o_{a.s.}(1)=\alpha x_1+o_{a.s.}(1).
\]
\noindent
Now we can derive the recursion equations between $x_k$ and $y_k$.

\noindent
Firstly, for $x_k$, $1\leq k\leq \left[\frac{T}{\tau}\right]$,
\begin{align*}
  \alpha &\cdot \frac{1}{T}tr\lb B^{-1}(\alpha)(P_1^{\tau})^k\rb =o_{a.s.}(1)+\\
  &~\frac{p}{T}\dfrac{1+y_0}{(1+y_0)^2-\alpha x_1^2}\left[\frac{1}{T}tr(\tilde{C}B^{-1}(\alpha)(P_1^{\tau})^k)-\frac{\alpha x_1}{1+y_0}\cdot\frac{1}{T}tr\lb B^{-1}(\alpha)(P_1^{\tau})^{k+1}\rb\right],
\end{align*}
i.e.
\begin{equation}\label{rec5}
  \alpha x_k=\frac{p}{T}\cdot\dfrac{1+y_0}{(1+y_0)^2-\alpha x_1^2}\cdot y_k-\frac{p}{T}\cdot\frac{\alpha x_1}{(1+y_0)^2-\alpha x_1^2}\cdot x_{k+1}+o_{a.s.}(1),~1\leq k\leq \left[\frac{T}{\tau}\right].
\end{equation}

Similarly, for $y_k$, $1\leq k\leq \left[\frac{T}{\tau}\right]+1$,
\begin{align*}
  y_k&=\frac{1}{T}tr\lb \tilde{C}B^{-1}(\alpha)(P_1^{\tau})^k \rb\\
  &=\frac{p}{T}\cdot \dfrac{1+y_0}{(1+y_0)^2-\alpha x_1^2}\cdot\left[\frac{1}{T}tr(B^{-1}(\alpha)(P_1^{\tau})^k)-\frac{x_1}{1+y_0}\cdot\frac{1}{T}tr\lb \tilde{C}B^{-1}(\alpha)(P_1^{\tau})^{k-1}\rb\right]+o_{a.s.}(1),
\end{align*}
i.e.
\begin{equation}\label{rec6}
  y_k=\frac{p}{T}\cdot \dfrac{1+y_0}{(1+y_0)^2-\alpha x_1^2}\cdot x_k-\frac{p}{T}\cdot \dfrac{x_1}{(1+y_0)^2-\alpha x_1^2}\cdot y_{k-1}+o_{a.s.}(1), ~1\leq k\leq \left[\frac{T}{\tau}\right]+1.
\end{equation}

\noindent
Particularly, for $k=\left[\frac{T}{\tau}\right]+1$,
we have
\begin{equation}\label{rec4}
  y_{\left[\frac{T}{\tau}\right]+1}=\frac{p}{T}\cdot \dfrac{1+y_0}{(1+y_0)^2-\alpha x_1^2}\cdot x_{\left[\frac{T}{\tau}\right]+1}-\frac{p}{T}\cdot \dfrac{x_1}{(1+y_0)^2-\alpha x_1^2}\cdot y_{\left[\frac{T}{\tau}\right]}+o_{a.s.}(1).
\end{equation}
\noindent
Note that
\[x_{\left[\frac{T}{\tau}\right]+1}=y_{\left[\frac{T}{\tau}\right]+1}=0,\]
\noindent
following the same arguments in Lemma \ref{lem1},  we have,
  $\forall 1\leq j\leq p$, $1\leq k\leq \left[\frac{T}{\tau}\right]$, almost surely,
  $$r_j^{t}B_j^{-1}(\alpha)(P_1^{\tau})^k r_j=\frac{1}{T}tr\left(B^{-1}\left(\alpha\right)(P_1^{\tau})^k\right)+o_{a.s.}(1)=o_{a.s.}(1),$$
  $$r_j^t\tilde{C}_jB_j^{-1}(\alpha)(P_1^{\tau})^k r_j=\frac{1}{T}tr\left(\tilde{C}B^{-1}\left(\alpha\right)(P_1^{\tau})^k\right)+o_{a.s.}(1)=o_{a.s.}(1),$$
  where the $o_{a.s.}(1)$ terms are uniform in $1\leq j\leq p$.

\end{proof}

\appendix

\section{Justification of truncation, centralization and
  standardization}\label{app}

Recall that
$\varepsilon_{t}=\left(\varepsilon_{1t},\cdots,\varepsilon_{pt}\right)^{t}$,
$\varepsilon_{it}$ are independent real-valued random variables with
$\mathbb{E}\left(\varepsilon_{it}\right)=0,\mathbb{E}\left(|\varepsilon_{it}|^{2}\right)=1$,
and we are interested in is the LSD of time-lagged covariance matrix

\[
A=\frac{1}{T^{2}}\left(\sum_{i=1}^{T}\varepsilon_{i}\varepsilon_{i-1}^{t}\right)\left(\sum_{j=1}^{T}\varepsilon_{j-1}\varepsilon_{j}^{t}\right).
\]

The assumed moment conditions are: for some constant $M$,

\[
\sup_{1\leq i\leq p,0\leq t\leq
  T}\mathbb{E}\left(|\varepsilon_{it}|^{4}\right)<M,
\]

and for any $\eta>0$,

\[
\frac{1}{\eta^{4}pT}\sum_{i=1}^{p}\sum_{t=0}^{T}\mathbb{E}\left(|\varepsilon_{it}|^{4}I_{(|\varepsilon_{it}|\geq\eta
    T^{1/4})}\right)=o\left(1\right).
\]

The aim of the truncation, centralization and standardization
procedure is that after these treatment, we may assume that

\[
|\varepsilon_{ij}|\leq\eta T^{1/4},\quad
\mathbb{E}\left(\varepsilon_{ij}\right)=0,\quad
Var\left(\varepsilon_{ij}\right)=1,\quad
\mathbb{E}\left(|\varepsilon_{ij}|^{4}\right)<M.
\]

Since the whole procedure is the same with respect to different time lag $\tau$, we focus on the case of lag-1 sample auto-covariance matrix.

\subsection{Truncation}

Let
$\tilde{\varepsilon}_{jt}=\varepsilon_{jt}I_{(|\varepsilon_{jt}|<\eta
  T^{1/4})}$,
$\tilde{\varepsilon}_{t}=\left(\tilde{\varepsilon}_{1t},\cdots,\tilde{\varepsilon}_{pt}\right)^{t}$,
$\eta$ can be seen as a constant.

Define
$$\tilde{A}=\frac{1}{T^{2}}\left(\sum_{i=1}^{T}\tilde{\varepsilon}_{i}\tilde{\varepsilon}_{i-1}^{t}\right)\left(\sum_{j=1}^{T}\tilde{\varepsilon}_{j-1}\tilde{\varepsilon}_{j}^{t}\right),$$
then according to Theorem A.44 of \cite{BS10}
which states that

\[
\lVert F^{AA^{*}}-F^{BB^{*}}\rVert \leq\frac{1}{p}
\mathrm{rank}\left(A-B\right),
\]

\noindent
we have
\begin{align*}
  \lVert F^{A}-F^{\tilde{A}}\rVert & \leq  \frac{1}{p} \mathrm{rank}\left(\frac{1}{T}\sum_{i=1}^{T}\tilde{\varepsilon}_{i}\tilde{\varepsilon}_{i-1}^{t}-\frac{1}{T}\sum_{i=1}^{T}\varepsilon_{i}\varepsilon_{i-1}^{t}\right)\\
  &\leq \frac{1}{p}
  \mathrm{rank}\left(\frac{1}{T}\sum_{i=1}^{T}\tilde{\varepsilon}_{i}(\tilde{\varepsilon}_{i-1}^{t}-\varepsilon_{i-1}^{t})\right)
  +\frac{1}{p} \mathrm{rank}\left(\frac{1}{T}\sum_{i=1}^{T}(\tilde{\varepsilon}_{i}-\varepsilon_{i})\varepsilon_{i-1}^{t}\right)\\
  &\leq \frac{1}{p}\sum_{i=1}^{T}
  \mathrm{rank}\left(\frac{1}{T}\tilde{\varepsilon}_{i}(\tilde{\varepsilon}_{i-1}^{t}-\varepsilon_{i-1}^{t})\right)
  +\frac{1}{p}\sum_{i=1}^{T} \mathrm{rank}\left(\frac{1}{T}(\tilde{\varepsilon}_{i}-\varepsilon_{i})\varepsilon_{i-1}^{t}\right)\\
  & \leq
  \frac{2}{p}\sum_{t=0}^{T}\sum_{i=1}^{p}I_{(|\varepsilon_{it}|\geq\eta
    T^{1/4})},
\end{align*}

\begin{align*}
  \mathbb{E}\left(\frac{1}{p}\sum_{t=0}^{T}\sum_{i=1}^{p}I_{(|\varepsilon_{it}|\geq\eta
      T^{1/4})}\right)
  &\leq \frac{1}{p}\sum_{t=0}^{T}\sum_{i=1}^{p}\mathbb{E}\left(\frac{|\varepsilon_{it}|^{4}}{\eta^{4}\cdot T}I_{(|\varepsilon_{it}|\geq\eta T^{1/4})}\right)\\
  & =
  \frac{1}{\eta^{4}pT}\sum_{i=1}^{p}\sum_{t=0}^{T}\mathbb{E}\left(|\varepsilon_{it}|^{4}I_{(|\varepsilon_{it}|\geq\eta
      T^{1/4})}\right) =o\left(1\right),
\end{align*}

\begin{align*}
  Var\left(\frac{1}{p}\sum_{t=0}^{T}\sum_{i=1}^{p}I_{(|\varepsilon_{it}|\geq\eta
      T^{1/4})}\right)
  &=  \frac{1}{p^{2}}\sum_{t=0}^{T}\sum_{i=1}^{p}Var\left(I_{(|\varepsilon_{it}|\geq\eta T^{1/4})}\right)\\
  &\leq
  \frac{1}{p^{2}}\sum_{t=0}^{T}\sum_{i=1}^{p}\mathbb{E}\left(I_{(|\varepsilon_{it}|\geq\eta
      T^{1/4})}\right) =o\left(\frac{1}{T}\right).
\end{align*}

Applying Bernstein's inequality

\[
\mathbb{P}\left(|S_{n}|\geq\varepsilon\right)\leq2\exp\left(-\frac{\varepsilon^{2}}{2\left(B_{n}^{2}+b\varepsilon\right)}\right),
\]

\noindent
where $S_{n}=\sum_{i=1}^{n}X_{i}$, $B_{n}^2=\mathbb{E}S_{n}^{2}$,
$X_{i}$ are i.i.d bounded by b, we can get that, for any small
$\varepsilon>0$,
\[
\mathbb{P}\left(\frac{1}{p}\sum_{t=0}^{T}\sum_{i=1}^{p}I_{(|\varepsilon_{it}|\geq\eta
    T^{1/4})}\geq\varepsilon\right)\leq2\exp\left(-\frac{\varepsilon^{2}}{2\left(\frac{\varepsilon}{p}+o\left(\frac{1}{T}\right)\right)}\right)=2\exp\left(-K_{\varepsilon}p\right),
\]
\noindent
which is summable, then by Borel-Cantelli lemma,

\[
a.s.\lVert F^{A}-F^{\tilde{A}}\rVert \rightarrow0, \mbox{as
}T\rightarrow\infty.
\]

\subsection{Centralization}

Let
$\hat{\varepsilon}_{it}=\tilde{\varepsilon}_{it}-\mathbb{E}\left(\tilde{\varepsilon}_{it}\right)$,
$~\hat{\varepsilon}_{t}=\left(\hat{\varepsilon}_{1t},\cdots,\hat{\varepsilon}_{pt}\right)$,
$~\hat{A}=\frac{1}{T^{2}}\left(\sum_{i=1}^{T}\hat{\varepsilon}_{i}\hat{\varepsilon}_{i-1}^{t}\right)\left(\sum_{j=1}^{T}\hat{\varepsilon}_{j-1}\hat{\varepsilon}_{j}^{t}\right)$.

With Theorem A.46 of \cite{BS10},
\[
L^{4}\left(F^{AA^{*}},F^{BB^{*}}\right)\leq\frac{2}{p^{2}}tr\left(AA^{*}+BB^{*}\right)tr\left(\left(A-B\right)\left(A-B\right)^{*}\right),
\]

we have

\begin{align*}
  L^{4}\left(F^{\hat{A}},F^{\tilde{A}}\right) & \leq \frac{2}{p^{2}}tr\left(\frac{1}{T^{2}}\left(\sum_{i=1}^{T}\hat{\varepsilon}_{i}\hat{\varepsilon}_{i-1}^{t}\right)\left(\sum_{j=1}^{T}\hat{\varepsilon}_{j-1}\hat{\varepsilon}_{j}^{t}\right)+\frac{1}{T^{2}}\left(\sum_{i=1}^{T}\tilde{\varepsilon}_{i}\tilde{\varepsilon}_{i-1}^{t}\right)\left(\sum_{j=1}^{T}\tilde{\varepsilon}_{j-1}\tilde{\varepsilon}_{j}^{t}\right)\right)\\
  & \quad \cdot tr\left(\frac{1}{T^{2}}\left(\sum_{i=1}^{T}\hat{\varepsilon}_{i}\hat{\varepsilon}_{i-1}^{t}-\sum_{i=1}^{T}\tilde{\varepsilon}_{i}\tilde{\varepsilon}_{i-1}^{t}\right)\left(\sum_{j=1}^{T}\hat{\varepsilon}_{j-1}\hat{\varepsilon}_{j}^{t}-\sum_{j=1}^{T}\tilde{\varepsilon}_{j-1}\tilde{\varepsilon}_{j}^{t}\right)\right)\\
  & := N_1\cdot N_2.
\end{align*}
\noindent
For $N_2$,
\begin{align}\label{eq12}
  N_2 & = tr\left(\frac{1}{T^{2}}\left(\sum_{i=1}^{T}\hat{\varepsilon}_{i}\hat{\varepsilon}_{i-1}^{t}-\sum_{i=1}^{T}\tilde{\varepsilon}_{i}\tilde{\varepsilon}_{i-1}^{t}\right)\left(\sum_{j=1}^{T}\hat{\varepsilon}_{j-1}\hat{\varepsilon}_{j}^{t}-\sum_{j=1}^{T}\tilde{\varepsilon}_{j-1}\tilde{\varepsilon}_{j}^{t}\right)\right)\nonumber\\
  &=
  tr\left(\frac{1}{T^{2}}\sum_{i=1}^{T}\left(\mathbb{E}\left(\tilde{\varepsilon}_{i}\right)\mathbb{E}\left(\tilde{\varepsilon}_{i-1}^{t}\right)-\mathbb{E}\left(\tilde{\varepsilon}_{i}\right)\tilde{\varepsilon}_{i-1}^{t}-\tilde{\varepsilon}_{i}\mathbb{E}\left(\tilde{\varepsilon}_{i-1}^{t}\right)\right)\right.\nonumber
  \\
  &\quad \cdot \left.\sum_{i=1}^{T}\left(\mathbb{E}\left(\tilde{\varepsilon}_{i}\right)\mathbb{E}\left(\tilde{\varepsilon}_{i-1}^{t}\right)-\mathbb{E}\left(\tilde{\varepsilon}_{i}\right)\tilde{\varepsilon}_{i-1}^{t}-\tilde{\varepsilon}_{i}\mathbb{E}\left(\tilde{\varepsilon}_{i-1}^{t}\right)\right)\right)^{t}\nonumber\\
  &=  \left\lVert \frac{1}{T}\sum_{i=1}^{T}\left(\mathbb{E}\left(\tilde{\varepsilon}_{i}\right)\mathbb{E}\left(\tilde{\varepsilon}_{i-1}^{t}\right)-\mathbb{E}\left(\tilde{\varepsilon}_{i}\right)\tilde{\varepsilon}_{i-1}^{t}-\tilde{\varepsilon}_{i}\mathbb{E}\left(\tilde{\varepsilon}_{i-1}^{t}\right)\right)\right\rVert ^{2}\nonumber\\
  &\leq 2\left\lVert
    \frac{1}{T}\sum_{i=1}^{T}\mathbb{E}\left(\tilde{\varepsilon}_{i}\right)\mathbb{E}\left(\tilde{\varepsilon}_{i-1}^{t}\right)\right\rVert^{2}+2\left\lVert
    \frac{1}{T}\sum_{i=1}^{T}\mathbb{E}\left(\tilde{\varepsilon}_{i}\right)\tilde{\varepsilon}_{i-1}^{t}\right\rVert^{2}+2\left\lVert
    \frac{1}{T}\sum_{i=1}^{T}\tilde{\varepsilon}_{i}\mathbb{E}\left(\tilde{\varepsilon}_{i-1}^{t}\right)\right\rVert^{2}.
\end{align}
\noindent
Consider the second term, we have
\begin{align*}
  & \left\lVert \frac{1}{T}\sum_{i=1}^{T}\mathbb{E}\left(\tilde{\varepsilon}_{i}\right)\tilde{\varepsilon}_{i-1}^{t}\right\rVert^{2}=\frac{1}{T^{2}}\sum_{i,j=1}^{p}\left(\sum_{t=1}^{T}\tilde{\varepsilon}_{j,t-1}\mathbb{E}\left(\tilde{\varepsilon}_{it}\right)\right)^{2}\\
  =& \frac{1}{T^{2}}\sum_{i,j=1}^{p}\sum_{t=1}^{T}\tilde{\varepsilon}_{j,t-1}^{2}\left(\mathbb{E}\left(\tilde{\varepsilon}_{it}\right)\right)^{2}+\frac{1}{T^{2}}\sum_{i,j=1}^{p}\sum_{t_{1}\neq t_{2}}\tilde{\varepsilon}_{j,t_{1}-1}\tilde{\varepsilon}_{j,t_{2}-1}\mathbb{E}\left(\tilde{\varepsilon}_{it_{1}}\right)\mathbb{E}\left(\tilde{\varepsilon}_{it_{2}}\right)\\
  {=:} & M_{1}+M_{2}.
\end{align*}

Notice that $\sup_{1\leq i\leq p,1\leq t\leq
  T}\mathbb{E}\left(\varepsilon_{it}^{4}\right)<M$, we have

\begin{align*}
  \mathbb{E}\left(M_{1}\right) & =  \frac{1}{T^{2}}\sum_{i,j=1}^{p}\sum_{t=1}^{T}\mathbb{E}\left(\tilde{\varepsilon}_{j,t-1}^{2}\right)\left(\mathbb{E}\left(\tilde{\varepsilon}_{it}\right)\right)^{2}\\
  & \leq \frac{C_{1}}{T^{2}}\sum_{i,j=1}^{p}\sum_{t=1}^{T}\left(\mathbb{E}\left(|\varepsilon_{it}|I_{(|\varepsilon_{it}|\geq\eta T^{1/4})}\right)\right)^{2}\\
  & \leq  \frac{C_{1}}{T^{2}}\sum_{i,j=1}^{p}\sum_{t=1}^{T}\frac{1}{\eta^{6}\cdot T^{3/2}}\left(\mathbb{E}\left(|\varepsilon_{it}|^{4}I_{(|\varepsilon_{it}|\geq\eta T^{1/4})}\right)\right)^{2}\\
  & = O\left(T^{-\frac{1}{2}}\right),
\end{align*}
\noindent
Moreover,
\begin{align*}
  Var\left(M_{1}\right) & =  \frac{1}{T^{4}}\sum_{j=1}^{p}\sum_{t=1}^{T}\mathbb{E}\left(\tilde{\varepsilon}_{j,t-1}^{2}-\mathbb{E}\left(\tilde{\varepsilon}_{j,t-1}^{2}\right)\right)^{2}\left(\sum_{i=1}^{p}\left(\mathbb{E}\left(\tilde{\varepsilon}_{it}\right)\right)^{2}\right)^{2}\\
  & \leq  \frac{1}{T^{4}}\sum_{j=1}^{p}\sum_{t=1}^{T}\mathbb{E}\left(\tilde{\varepsilon}_{j,t-1}^{2}\right)^{4}\left(\sum_{i=1}^{p}\left(\mathbb{E}\left(|\varepsilon_{it}|I_{(|\varepsilon_{it}|\geq\eta\cdot T^{1/4})}\right)\right)^{2}\right)^{2}\\
  & \leq
  \frac{C_{2}}{T^{4}}\sum_{j=1}^{p}\sum_{t=1}^{T}\frac{1}{T^{3}}\left(\sum_{i=1}^{p}\left(\mathbb{E}\left(|\varepsilon_{it}|^{4}I_{(|\varepsilon_{it}|\geq\eta\cdot
          T^{1/4})}\right)\right)^{2}\right)^{2}=O\left(T^{-3}\right).
\end{align*}
\noindent
Therefore, $a.s. \quad M_{1}\rightarrow0, \mbox{as }
T\rightarrow\infty$.

\noindent
For the term $M_2$, we have
\begin{align*}
  \mathbb{E}\left(M_{2}\right) & =  \frac{1}{T^{2}}\sum_{i,j=1}^{p}\sum_{t_{1}\neq t_{2}}\mathbb{E}\left(\tilde{\varepsilon}_{j,t_{1}-1}\tilde{\varepsilon}_{j,t_{2}-1}\right)\mathbb{E}\left(\tilde{\varepsilon}_{it_{1}}\right)\mathbb{E}\left(\tilde{\varepsilon}_{it_{2}}\right)\\
  & =  \frac{1}{T^{2}}\sum_{i,j=1}^{p}\sum_{t_{1}\neq t_{2}}\mathbb{E}\left(\tilde{\varepsilon}_{j,t_{1}-1}\right)\mathbb{E}\left(\tilde{\varepsilon}_{j,t_{2}-1}\right)\mathbb{E}\left(\tilde{\varepsilon}_{it_{1}}\right)\mathbb{E}\left(\tilde{\varepsilon}_{it_{2}}\right)\\
  & \leq \frac{1}{T^{2}}\sum_{i,j=1}^{p}\sum_{t_{1}\neq
    t_{2}}\frac{1}{\eta^{12}\cdot T^{3}}\left(\sup_{1\leq i\leq
      p,0\leq t\leq
      T}\mathbb{E}\left(|\varepsilon_{it}|^{4}I_{(|\varepsilon_{it}|\geq\eta\cdot
        T^{1/4})}\right)\right)^{4}=O\left(T^{-1}\right),
\end{align*}

\begin{align*}
  Var\left(M_{2}\right) & =  \frac{1}{T^{4}}\sum_{j=1}^{p}\sum_{t_{1}\neq t_{2}}Var\left(\tilde{\varepsilon}_{j,t_{1}-1}\tilde{\varepsilon}_{j,t_{2}-1}\right)\left(\sum_{i=1}^{p}\mathbb{E}\left(\tilde{\varepsilon}_{it_{1}}\right)\mathbb{E}\left(\tilde{\varepsilon}_{it_{2}}\right)\right)^{2}\\
  & \leq  \frac{1}{T^{4}}\sum_{j=1}^{p}\sum_{t_{1}\neq t_{2}}\mathbb{E}\left(\tilde{\varepsilon}_{j,t_{1}-1}^{2}\right)\mathbb{E}\left(\tilde{\varepsilon}_{j,t_{2}-1}^{2}\right)\left(\sum_{i=1}^{p}\left(\sup_{1\leq i\leq p,0\leq t\leq T}\mathbb{E}\left(\tilde{\varepsilon}_{it}\right)\right)^{2}\right)^{2}\\
  & \leq \frac{C_{3}}{T^{4}}\sum_{j=1}^{p}\sum_{t_{1}\neq
    t_{2}}\frac{1}{T^{3}}\left(\sum_{i=1}^{p}\left(\sup_{1\leq i\leq
        p,0\leq t\leq
        T}\mathbb{E}\left(|\varepsilon_{it}|^{4}I_{(|\varepsilon_{it}|\geq\eta\cdot
          T^{1/4})}\right)\right)^{2}\right)^{2}=O\left(T^{-2}\right).
\end{align*}
\noindent
Therefore, a.s. $M_{2}\rightarrow0$, as $T\rightarrow\infty$.

Consequently, $\left\lVert
  \frac{1}{T}\sum_{i=1}^{T}\mathbb{E}\left(\tilde{\varepsilon}_{i}\right)\tilde{\varepsilon}_{i-1}^{t}\right\rVert^{2}\rightarrow
0, a.s.$ Similarly, we can prove that the last term in equation
\eqref{eq12} tends to zero almost surely. As for the first term, we have
\begin{align*}
  \left\lVert\frac{1}{T}\sum_{i=1}^{T}\mathbb{E}\left(\tilde{\varepsilon}_{i}\right)\mathbb{E}\left(\tilde{\varepsilon}_{i-1}^{t}\right)\right\rVert^{2} & = \sum_{i,j=1}^{p}\left(\frac{1}{T}\sum_{t=1}^{T}\left(\mathbb{E}\left(\tilde{\varepsilon}_{it}\right)\mathbb{E}\left(\tilde{\varepsilon}_{j,t-1}\right)\right)\right)^{2}\\
  & = \frac{1}{T^{2}}\sum_{i,j=1}^{p}\sum_{t_{1}=1}^{T}\sum_{t_{2}=1}^{T}\mathbb{E}\left(\tilde{\varepsilon}_{it_{1}}\right)\mathbb{E}\left(\tilde{\varepsilon}_{j,t_{1}-1}\right)\mathbb{E}\left(\tilde{\varepsilon}_{it_{2}}\right)\mathbb{E}\left(\tilde{\varepsilon}_{j,t_{2}-1}\right)\\
  & \leq
  \frac{C_{4}}{T^{2}}\sum_{i,j=1}^{p}\sum_{t_{1}=1}^{T}\sum_{t_{2}=1}^{T}\frac{1}{T^{3}}\left(\sup_{1\leq
      i\leq p,0\leq t\leq
      T}\mathbb{E}\left(|\varepsilon_{it}|^{4}I_{(|\varepsilon_{it}|\geq\eta\cdot
        T^{1/4})}\right)\right)^{4}=O\left(T^{-1}\right).
\end{align*}
\noindent
Therefore
\[
N_1=tr\left(\frac{1}{T^{2}}\left(\sum_{i=1}^{T}\hat{\varepsilon}_{i}\hat{\varepsilon}_{i-1}^{t}-\sum_{i=1}^{T}\tilde{\varepsilon}_{i}\tilde{\varepsilon}_{i-1}^{t}\right)\left(\sum_{j=1}^{T}\hat{\varepsilon}_{j-1}\hat{\varepsilon}_{j}^{t}-\sum_{j=1}^{T}\tilde{\varepsilon}_{j-1}\tilde{\varepsilon}_{j}^{t}\right)\right)\rightarrow0,
a.s.
\]
\noindent
Now, we consider $N_1$,
\[
\frac{1}{p^{2}}tr\left(\frac{1}{T^{2}}\left(\sum_{i=1}^{T}\hat{\varepsilon}_{i}\hat{\varepsilon}_{i-1}^{t}\right)\left(\sum_{j=1}^{T}\hat{\varepsilon}_{j-1}\hat{\varepsilon}_{j}^{t}\right)+\frac{1}{T^{2}}\left(\sum_{i=1}^{T}\tilde{\varepsilon}_{i}\tilde{\varepsilon}_{i-1}^{t}\right)\left(\sum_{j=1}^{T}\tilde{\varepsilon}_{j-1}\tilde{\varepsilon}_{j}^{t}\right)\right){=:}M_{3}+M_{4},
\]
\noindent
Firstly, for $M_{3}$, since
$\mathbb{E}\left(\hat{\varepsilon}_{it}\right)=0$,
\begin{align*}
  \mathbb{E}\left(M_{3}\right) & =  \mathbb{E}\left(\frac{1}{p^{2}T^{2}}\sum_{i,j=1}^{p}\left(\sum_{t=1}^{T}\hat{\varepsilon}_{it}\hat{\varepsilon}_{j,t-1}\right)^{2}\right)\\
  & =
  \frac{1}{p^{2}T^{2}}\sum_{i,j=1}^{p}\sum_{t=1}^{T}\mathbb{E}\left(\hat{\varepsilon}_{it}^{2}\right)\mathbb{E}\left(\hat{\varepsilon}_{j,t-1}^{2}\right)=O\left(\frac{1}{T}\right).
\end{align*}
\noindent
Moreover,
\begin{align*}
  Var\left(M_{3}\right) & =  \mathbb{E}\left(\frac{1}{p^{2}T^{2}}\sum_{i,j=1}^{p}\left(\sum_{t=1}^{T}\hat{\varepsilon}_{it}\hat{\varepsilon}_{j,t-1}\right)^{2}\right)^{2}-\left(\mathbb{E}\left(M_{3}\right)\right)^{2}\\
  & = \frac{1}{p^{4}T^{4}}\mathbb{E}\left(\sum_{i,j=1}^{p}\sum_{t=1}^{T}\hat{\varepsilon}_{it}^{2}\hat{\varepsilon}_{j,t-1}^{2}\right)^{2}+\frac{1}{p^{4}T^{4}}\mathbb{E}\left(\sum_{i,j=1}^{p}\sum_{t_{1}\neq t_{2}}\hat{\varepsilon}_{it_{1}}\hat{\varepsilon}_{j,t_{1}-1}\hat{\epsilon}_{it_{2}}\hat{\varepsilon}_{j,t_{2}-1}\right)^{2}+O\left(\frac{1}{T^{2}}\right)\\
  & \leq
  O\left(\frac{1}{T^{2}}\right)+O\left(\frac{1}{T^{3}}\right)+O\left(\frac{1}{T^{2}}\right)=O\left(\frac{1}{T^{2}}\right).
\end{align*}
\noindent
Therefore $M_3\gt 0$, a.s. Next for $M_4$,

\begin{align*}
  \mathbb{E}\left(M_{4}\right) & = \mathbb{E}\left(\frac{1}{p^{2}T^{2}}\sum_{i,j=1}^{p}\left(\sum_{t=1}^{T}\tilde{\varepsilon}_{it}\tilde{\varepsilon}_{j,t-1}\right)^{2}\right)\\
  & =  \frac{1}{p^{2}T^{2}}\sum_{i,j=1}^{p}\sum_{t=1}^{T}\mathbb{E}\tilde{\varepsilon}_{it}^{2}\mathbb{E}\tilde{\varepsilon}_{j,t-1}^{2}+\frac{1}{p^{2}T^{2}}\sum_{i,j=1}^{p}\sum_{t_{1}\neq t_{2}}\mathbb{E}\left(\tilde{\varepsilon}_{it_{1}}\right)\mathbb{E}\left(\tilde{\varepsilon}_{j,t_{1}-1}\right)\mathbb{E}\left(\tilde{\varepsilon}_{it_{2}}\right)\mathbb{E}\left(\tilde{\varepsilon}_{j,t_{2}-1}\right)\\
  & \leq
  O\left(\frac{1}{T}\right)+\frac{1}{p^{2}T^{2}}\sum_{i,j=1}^{p}\sum_{t_{1}\neq
    t_{2}}\frac{1}{\eta^{12}T^{3}}\left(\sup_{1\leq i\leq p,0\leq
      t\leq
      T}\mathbb{E}\left(|\varepsilon_{it}|^{4}I_{(|\varepsilon_{it}|\geq\eta\cdot
        T^{1/4})}\right)\right)^{4}=O\left(\frac{1}{T}\right).
\end{align*}

\begin{align*}
  Var\left(M_{4}\right) & = \frac{1}{p^{4}T^{4}}Var\left(\sum_{i,j=1}^{p}\left(\sum_{t=1}^{T}\tilde{\varepsilon}_{it}\tilde{\varepsilon}_{j,t-1}\right)^{2}\right)\\
  & \leq  \frac{1}{p^{4}T^{4}}\mathbb{E}\left(\sum_{i,j=1}^{p}\left(\sum_{t=1}^{T}\tilde{\varepsilon}_{it}\tilde{\varepsilon}_{j,t-1}\right)^{2}\right)^{2}\\
  & =  \frac{1}{p^{4}T^{4}}\mathbb{E}\left(\sum_{i,j=1}^{p}\sum_{t=1}^{T}\tilde{\varepsilon}_{it}^{2}\tilde{\varepsilon}_{j,t-1}^{2}\right)^{2}+\frac{1}{p^{4}T^{4}}\mathbb{E}\left(\sum_{i,j=1}^{p}\sum_{t_{1}\neq t_{2}}\tilde{\varepsilon}_{it_{1}}\tilde{\varepsilon}_{j,t_{1}-1}\tilde{\varepsilon}_{it_{2}}\tilde{\varepsilon}_{j,t_{2}-1}\right)^{2}\\
  & \leq
  O\left(\frac{1}{T^{2}}\right)+O\left(\frac{1}{T^{6}}\right)=O\left(\frac{1}{T^{2}}\right).
\end{align*}

Therefore, $M_{4}\rightarrow0, a.s. $. All in all,
\[
L^{4}\left(F^{\hat{A}},F^{\tilde{A}}\right)\leq N_1\cdot N_2
\leq4\left(M_{3}+M_{4}\right)\left(M_{1}+M_{2}\right)\rightarrow 0,
a.s. T\rightarrow\infty.
\]

\subsection{Rescaling}

Define
$\hat{\sigma}_{ij}^{2}=\mathbb{E}|\hat{\varepsilon}_{ij}|^{2}=\mathbb{E}|\tilde{\varepsilon}_{ij}-\mathbb{E}\tilde{\varepsilon}_{ij}|^{2}$,
we can see that as $T\rightarrow\infty$,
$\hat{\sigma}_{ij}^{2}\rightarrow1$ since
$\mathbb{E}(\varepsilon_{ij})=0$,
$Var\left(\varepsilon_{ij}\right)=1$.

According to Theorem A.46 of \cite{BS10}, we have
\begin{align*}
  L^{4}\left(F^{\hat{A}},F^{\hat{\sigma}_{ij}^{-4}\hat{A}}\right) & \leq  \frac{2}{p^{2}}\left[\frac{1+\hat{\sigma}_{ij}^{-4}}{T^{2}}tr\left(\left(\sum_{i=1}^{T}\hat{\varepsilon}_{i}\hat{\varepsilon}_{i-1}^{t}\right)\left(\sum_{j=1}^{T}\hat{\varepsilon}_{j-1}\hat{\varepsilon}_{j}^{t}\right)\right)\right]\\
  &\cdot\left[\frac{1-\hat{\sigma}_{ij}^{-4}}{T^{2}}tr\left(\left(\sum_{i=1}^{T}\hat{\varepsilon}_{i}\hat{\varepsilon}_{i-1}^{t}\right)\left(\sum_{j=1}^{T}\hat{\varepsilon}_{j-1}\hat{\varepsilon}_{j}^{t}\right)\right)\right]\\
  & =
  2\left(1-\hat{\sigma}_{ij}^{-8}\right)\left[\frac{1}{pT^{2}}tr\left(\left(\sum_{i=1}^{T}\hat{\varepsilon}_{i}\hat{\varepsilon}_{i-1}^{t}\right)\left(\sum_{j=1}^{T}\hat{\varepsilon}_{j-1}\hat{\varepsilon}_{j}^{t}\right)\right)\right]^{2}.
\end{align*}

Consider
$M_{5}:=\frac{1}{pT^{2}}tr\left(\left(\sum_{i=1}^{T}\hat{\varepsilon}_{i}\hat{\varepsilon}_{i-1}^{t}\right)\left(\sum_{j=1}^{T}\hat{\varepsilon}_{j-1}\hat{\varepsilon}_{j}^{t}\right)\right)$,

\begin{align*}
  \mathbb{E}\left(M_{5}\right) & = \frac{1}{pT^{2}}\sum_{i,j=1}^{p}\mathbb{E}\left(\sum_{t=1}^{T}\hat{\varepsilon}_{it}\hat{\varepsilon}_{j,t-1}\right)^{2}\\
  & =
  \frac{1}{pT^{2}}\sum_{i,j=1}^{p}\sum_{t=1}^{T}\mathbb{E}\left(\hat{\varepsilon}_{it}^{2}\right)\mathbb{E}\left(\hat{\varepsilon}_{j,t-1}^{2}\right)=c\hat{\sigma}_{ij}^{4}.
\end{align*}
\noindent
Moreover,
\begin{align*}
  Var\left(M_{5}\right) & \leq \mathbb{E}\left(\frac{1}{pT^{2}}\sum_{i,j=1}^{p}\left(\sum_{t=1}^{T}\hat{\varepsilon}_{it}\hat{\varepsilon}_{j,t-1}\right)^{2}\right)^{2}\\
  & =  \frac{1}{p^{2}T^{4}}\mathbb{E}\left(\sum_{i,j=1}^{p}\sum_{t=1}^{T}\hat{\varepsilon}_{it}^{2}\hat{\varepsilon}_{j,t-1}^{2}\right)^{2}+\frac{1}{p^{2}T^{4}}\mathbb{E}\left(\sum_{i,j=1}^{p}\sum_{t_{1}\neq t_{2}}\hat{\varepsilon}_{it_{1}}\hat{\varepsilon}_{j,t_{1}-1}\hat{\varepsilon}_{it_{2}}\hat{\varepsilon}_{j,t_{2}-1}\right)^{2}\\
  & = O\left(1\right)+O\left(\frac{1}{T^{2}}\right)=O\left(1\right).
\end{align*}
\noindent
Therefore
$L^{4}\left(F^{\hat{A}},F^{\hat{\sigma}_{ij}^{-4}\hat{A}}\right)\rightarrow0,
a.s.$



\end{document}